\documentclass[journal,10pt]{IEEEtran}
\usepackage{amsmath}

\usepackage{graphicx}
\usepackage{array}
\usepackage{mdwmath}
\usepackage{amssymb}
\usepackage{mdwtab}
\usepackage{amsmath,amsthm}
\usepackage{threeparttable}
\usepackage{color}
\usepackage{bm}
\usepackage{url}
\usepackage{algpseudocode}
\usepackage{algorithm}
\usepackage{multicol}
\usepackage{epstopdf}
\usepackage[colorlinks]{hyperref}
\usepackage{tabularx}
\usepackage{booktabs}

\usepackage[caption=false,font=footnotesize]{subfig}

\algrenewcommand{\algorithmicrequire}{\textbf{Input:}}
\algrenewcommand{\algorithmicensure}{\textbf{Output:}}

\newtheorem{remark}{Remark}
\newtheorem{lemma}{Lemma}

\newtheorem{definition}{Definition}

\begin{document}

\title{Spatial Modulation for Tx-SIMO-FAS:\\Port Selection and Performance Analysis}

\author{Xusheng Zhu,
Kai-Kit Wong,~\IEEEmembership{Fellow,~IEEE},
Hanjiang Hong,
Chenguang Rao, and
Kaitao Meng
\vspace{-6mm}

\thanks{The work of C. Rao, K. K. Wong, and H. Hong is supported by the Engineering and Physical Sciences Research Council (EPSRC) under grant EP/W026813/1.}

\thanks{X. Zhu, K. K. Wong, H. Hong, and C. Rao are with the Department of Electronic and Electrical Engineering, University College London, WC1E 7JE London, United Kingdom. K. K. Wong is also affiliated with the Department of Electronic Engineering, Kyung Hee University, Yongin-si, Gyeonggi-do 17104, Republic of Korea (e-mail: \{xusheng.zhu; kai-kit.wong; hanjiang.hong; chenguang.rao\}@ucl.ac.uk).}
\thanks{K. Meng is with the Department of Electrical and Electronic Engineering, University of Manchester, Manchester, United Kingdom (e-mail: kaitao.meng@manchester.ac.uk).}
\thanks{Corresponding author: Kai-Kit Wong.}
}

\maketitle

\begin{abstract}
This paper considers a single-input multiple-output (SIMO) setup with a fluid antenna system (FAS) at the transmitter side and multiple fixed antennas at the receiver, which is referred to as a Tx-SIMO-FAS. We investigate the use of spatial modulation (SM) utilizing the FAS on a single radio-frequency (RF) chain while the receiver side performs maximum-likelihood detection. Unlike conventional antenna arrays, however, the large number of fluid antenna ports accommodated within a limited aperture introduces strong spatial correlation, which reduces the distinguishability of port indices and degrades the reliability of index detection. To address this challenge, three correlation-aware port-selection schemes are proposed: successive fluid Euclidean-distance-optimized selection (SF-EDAS), successive orthogonal port selection (SOPS), and correlation-constrained orthogonal array selection (CC-COAS). These schemes focus on enhancing received-constellation separation, improving channel-basis conditioning, and jointly optimizing channel gain and inter-port decorrelation, respectively. To understand the performance limits of FAS-SM, a reliability analysis is developed by decomposing the channel into an energy-based degree of freedom (DoF), and an extreme-value DoF. High signal-to-noise ratio (SNR) analysis reveals an effective diversity order determined by the number of selected ports, the number of receive antennas, and the energy-based spatial DoF. Furthermore, the aperture-limited array gain is characterized through a scalar equivalent independent-look approximation involving the Digamma function. Numerical results demonstrate that the proposed schemes significantly outperform conventional SM and grouping-based benchmarks. Among them, CC-COAS achieves the most favorable tradeoff between error performance and computational complexity. The results further reveal a fundamental aperture-limited behavior that the array gain eventually saturates as the fluid-port density increases.
\end{abstract}

\begin{IEEEkeywords}
Fluid antenna systems (FAS), spatial modulation (SM), port selection, spatial correlation, performance analysis, extreme-value theory.
\end{IEEEkeywords}

\vspace{-2mm}
\section{Introduction}
\IEEEPARstart{T}{he maturation} of multiple-input multiple-output (MIMO) technology has greatly advanced wireless communications by exploiting spatial degrees of freedom (DoFs) to improve both capacity and reliability~\cite{alb2019ma}. In recent years, reconfigurable intelligent surfaces (RIS) have also been proposed to perform passive MIMO beamforming and spatial modulation for spatial multiplexing and diversity gains simultaneously~\cite{zhu2024dssm,zhu2023ris}. But conventional MIMO architectures are typically implemented using fixed-position antennas (FPAs), which limits their ability to fully exploit the continuous spatial variation of wireless channels in a given physical aperture.

To overcome this limitation, fluid antenna systems (FAS) have recently emerged as a promising paradigm for integrating reconfigurable antennas into the physical layer for future wireless networks \cite{new2025a,new2026_fas_jsac,hong2026fluid}. By dynamically adjusting physical positions or selecting favorable ports within a prescribed spatial aperture, FAS can exploit channel variations far more flexibly than conventional fixed-position architectures~\cite{wong2020per,wong2021flud,zhe2024mov}. Fundamental studies have characterized the extreme-value performance limits of FAS and revealed the associated diversity-multiplexing tradeoffs over spatial correlated channels~\cite{zhu2025fas,new2024an}. Building on these theoretical foundations, FAS has been extended to a variety of communication paradigms, including fluid antenna multiple access (FAMA) for massive connectivity \cite{rao2026code,hong2025mul}, unmanned aerial vehicle (UAV) communications for short-packet transmission~\cite{zhu2025faev,zhuug2025,zhu2026uav}, and fluid RISs for element-level pattern reconfigurability \cite{xiao2026f} and secure wireless transmission \cite{zhu2026fluid}. These developments highlight the potential of continuous spatial agility for future wireless systems.

In parallel, index modulation (IM) is as an efficient transmission framework that conveys additional information bits through the activation states of communication resources~\cite{wen2019a,zhu2026ind}. Amongst its various forms, spatial modulation (SM) is particularly attractive since it uses transmit-antenna indices to carry information while activating only one radio-frequency (RF) chain at a time~\cite{zhu2022on}. This mechanism enables SM to achieve a favorable tradeoff between spectral efficiency and hardware complexity, while avoiding inter-channel interference and reducing transceiver complexity~\cite{li2021single}. SM has already been extended to a range of emerging architectures, such as spatial scattering modulation in millimeter-wave (mmWave) MIMO systems~\cite{zhu2025spat,zhu2025toward} and RIS-assisted SM schemes based on transmissive and reflective surfaces~\cite{zhu2025trans,zhu2024on}. These works demonstrate the versatility of SM as a low-complexity and energy-efficient modulation principle.

Motivated by the complementary strengths of FAS and SM, there is a growing interest in their joint integration. On the one hand, FAS provides continuous spatial agility and rich port selectability within a given aperture. On the other hand, SM offers a natural mechanism to encode information through port indices on a single active RF chain. Early studies on FAS-assisted SM mainly focused on single-antenna reception scenarios in order to establish basic port-to-bit mapping principles. For example, the port-grouping method in~\cite{yang2024position} partitioned the fluid-port array into several physical groups and activated one port per group according to a heuristic distance criterion, while the FAS-SM orthogonal frequency-division multiplexing (OFDM) scheme in~\cite{chen2024fas} mapped information bits to dynamic port positions across different subcarriers. However, such single-antenna reception models inherently limit the achievable spatial diversity and overall system capacity.

To improve data rates, subsequent studies considered multi-antenna FAS-IM architectures with richer spatial activation strategies. Specifically, in~\cite{zhu2024index}, multiple RF chains were employed to map data streams onto different port combinations through field-response channels. In~\cite{guo2025fluid}, grouped IM further enhanced spectral efficiency by dividing the spatial constellation into several subgroups and simultaneously activating one port within each subgroup. Although these architectures improve the transmission rate, they often move away from the single-active-port principle of conventional SM by using multiple activated ports or port combinations. Such designs may require multiple RF chains or a more complex RF feeding/control network when the activated ports carry independent data streams or require independent amplitude/phase control. Therefore, the low hardware cost and energy efficiency of conventional single-RF-chain SM may be weakened. To further enhance reliability, coded IM was introduced in~\cite{fadd2025advan}, where spatial set-partition coding and turbo coding were jointly used to improve error performance. Related extensions have also been reported for RIS-aided millimeter-wave channels~\cite{zhu2025fluid}, receive-side IM~\cite{guo2025far}, secure communications~\cite{liu2026index}, FAS-assisted rectangular differential IM~\cite{zhang2026flu}, and other architectures~\cite{wang2026moa}.


Despite these advances, the physical-layer design of FAS-enabled SM under strong spatial correlation remains insufficiently understood. In a compact FAS aperture, a large number of candidate ports sample highly correlated channel responses. Therefore, different port indices may generate similar received signal vectors, which reduces the minimum Euclidean distance of the joint spatial-symbolic constellation. This effect directly degrades index detection and may offset the selection gain expected from dense port deployment. Existing studies have mainly focused on architectural extensions, grouping heuristics, or coding-aided designs. However, a systematic constellation design framework that jointly accounts for channel gain, spatial separability, computational complexity, and the single-RF-chain constraint is missing, which motivates this work. The main contributions of this paper are summarized as follows:
\begin{enumerate}
\item We propose a FAS-based SM system, referred to as the Tx-single-input multiple-output (SIMO)-FAS architecture according to \cite[Table IV]{new2025a}, in which the transmitter activates one selected port through a single RF chain and the receiver employs a conventional fixed multi-antenna array. The information bits are jointly conveyed by the activated port index and the modulation symbol. A maximum-likelihood (ML) detector is used to recover both the spatial index and the transmitted symbol.
\item We formulate the spatial-domain constellation design problem caused by strong inter-port correlation in FAS apertures. To address this problem with practical complexity, we develop three port-selection algorithms.  First, successive fluid Euclidean-distance-optimized selection (SF-EDAS) improves the minimum Euclidean distance of the received constellation. Second, successive orthogonal port selection (SOPS) constructs a nearly orthogonal channel basis by successive null-space projection. Third, correlation-constrained orthogonal array selection (CC-COAS) uses an adaptive correlation constraint to balance channel-gain and spatial decorrelation.
\item Moreover, we develop an analytical reliability framework for the proposed Tx-SIMO-FAS SM architecture. The framework includes an algorithm-aware pairwise error probability (PEP) expression, a closed-form static baseline, and a high-signal-to-noise ratio (SNR) asymptotic characterization. To clarify the role of the physical aperture, we separate the spatial channel into an energy-based DoF for diversity analysis and an extreme-value DoF for array-gain analysis. The resulting expression reveals an order-statistics penalty introduced by the need to select a multi-port subset for SM.
\item We provide extensive simulation results to support the theoretical analysis and compare the proposed algorithms with conventional SM and grouping-based benchmarks. The results show that SF-EDAS and CC-COAS provide strong reliability gains, while SOPS becomes more effective when the aperture is sufficiently large. Among the three methods, CC-COAS achieves the best overall tradeoff between reliability and complexity. The results further confirm that increasing the port density improves performance but only up to an aperture-limited saturation level, consistent with the existing literature.
\end{enumerate}

The remainder of this paper is organized as follows. In Section~\ref{sec:system_model}, we introduce the FAS-SM system model and the spatial correlation model. Section~\ref{sec:constellation_design} presents the proposed correlation-aware constellation design algorithms. Section~\ref{sec:performance_analysis} develops the reliability analysis and the asymptotic performance characterization. Section~\ref{sec:numerical_results} provides numerical results and discussions. Section~\ref{sec:conclusion} concludes the paper.

\section{System Model} \label{sec:system_model}
Consider a point-to-point MIMO link over a quasi-static Rayleigh block-fading channel. To strike a desirable balance between hardware cost and system performance, the transmitter incorporates a FAS driven by a single active RF chain, while the receiver is equipped with $N_{\rm r}$ conventional FPAs. This model is referred to as Tx-SIMO-FAS \cite[Table IV]{new2025a}. The single RF chain can be dynamically connected to one of $N$ electronically accessible fluid ports. These candidate ports are uniformly distributed along a one-dimensional linear aperture of length $W\lambda$, where $W$ and $\lambda$ denote the normalized aperture size and the carrier wavelength, respectively. Accordingly, the coordinate of the $n$-th port is expressed as
\begin{equation} \label{eq:port_location}
z_n = \frac{n-1}{N-1}W\lambda,~n=1,2,\ldots,N.
\end{equation}

\begin{figure}[]
\centering
\includegraphics[width=.95\linewidth]{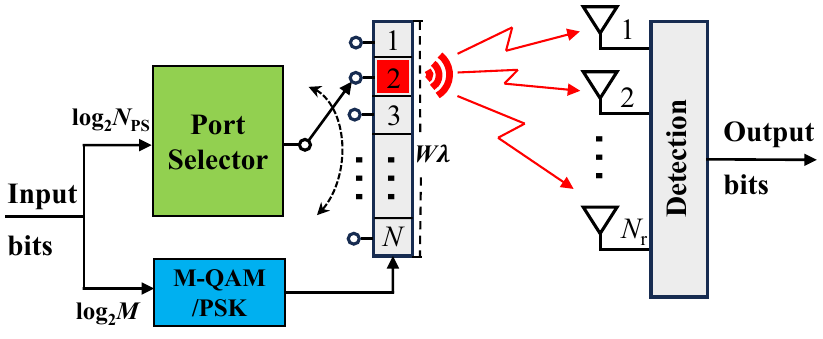}
\caption{System model of the proposed Tx-SIMO-FAS SM scheme.}\label{fig:sys_model}
\end{figure}

\subsection{Signal Transmission}
Fig.~\ref{fig:sys_model} shows the proposed SM transmission scheme under the Tx-SIMO-FAS model. Before data transmission, the transmitter selects $N_{\rm PS}$ ports from the full set of $N$ candidate ports. The selected ports form the spatial alphabet used for SM. Let $\mathcal{I}_a \subset \{1,2,\ldots,N\}$ denote the selected port set, where $|\mathcal{I}_a|=N_{\rm PS}$. Due to the single-RF-chain constraint, only one port in $\mathcal{I}_a$ is activated in each channel use.

During each transmission block, the incoming data stream is partitioned into groups of $B$ bits per channel use (bpcu), which is formulated as
\begin{equation}
B = \log_{2}(N_{\rm PS}) + \log_{2}(M).
\end{equation}
Upon splitting the bit sequence, the data is mapped into two strictly orthogonal signal domains. First, the primary $\log_{2}(M)$ bits are fed into the modulation mapper to generate a complex symbol $s$ drawn from a unit-energy constellation alphabet $\mathcal{S}$, satisfying the power constraint $\mathbb{E}[|s|^{2}] = 1$. Secondly, the residual $\log_{2}(N_{\rm PS})$ bits are utilized to drive the functional port selector. This selector dynamically determines the spatial index $i \in \mathcal{I}_{a}$ and switches the RF chain, dictating that the generated symbol $s$ will be physically radiated exclusively from the $i$-th \textit{selected} fluid port ($i \in \mathcal{I}_a$). This mechanism ensures that although $N_{\rm PS}$ ports are available for indexing, the single RF chain only illuminates one specific location at any given instant, thereby maintaining low hardware complexity.

The transmit vector is denoted by $\mathbf{x} \in \mathbb{C}^{N \times 1}$. If the $i$-th port is activated to transmit symbol $s$, the vector is formulated as
$\mathbf{x} = \mathbf{e}_{i} s$,
where $\mathbf{e}_{i}$ denotes the $i$-th column of the $N \times N$ identity matrix $\mathbf{I}_{N}$. Consequently, the system operates over a joint spatial-symbolic alphabet defined as
\begin{equation}
    \mathcal{X} = \{\mathbf{e}_{i} s \mid i \in \mathcal{I}_{a}, s \in \mathcal{S}\}, \label{eq:joint_alphabet}
\end{equation}
with a total cardinality of $|\mathcal{X}| = N_{\rm PS}M$.

The received signal vector $\mathbf y\in\mathbb C^{N_r\times 1}$ is given by
\begin{equation}
\mathbf{y} = \sqrt{\gamma}\mathbf{H}\mathbf{x}+\mathbf{n} = \sqrt{\gamma}\mathbf{h}_i s+\mathbf{n}, \label{eq:received_signal}
\end{equation}
in which $\gamma$ is the average transmit SNR per receive antenna, $\mathbf{H}=[\mathbf{h}_1,\ldots,\mathbf{h}_N]\in\mathbb{C}^{N_r\times N}$ denotes the spatially correlated channel matrix, $\mathbf{h}_i$ is the channel vector associated with port $i$, and the noise vector is modeled as $\mathbf{n}\sim\mathcal{CN}(\mathbf{0},\mathbf{I}_{N_r})$.

\subsection{Correlation Model}
Since the ports are closely spaced within a compact aperture, their channel responses are spatially correlated. Under a two-dimensional isotropic scattering model, the transmit-side correlation between ports $m$ and $n$ follows Jakes' model as
\begin{equation}\label{eq:jakes_correlation}
R_{m,n} = 
J_0\left(\frac{2\pi W |m-n|}{N-1}\right),
\end{equation}
where $J_0(\cdot)$ is the zeroth-order Bessel function of the first kind. The receive FPAs are assumed to be sufficiently separated, so the receive-side spatial correlation is ignored. Thus, the channel matrix is modeled as
\begin{equation}\label{eq:kronecker_model}
\mathbf{H} = \mathbf{H}_{w} \mathbf{R}^{1/2},
\end{equation}
where $\mathbf{H}_w\in\mathbb{C}^{N_r\times N}$ has independent $\mathcal{CN}(0,1)$ entries, and $\mathbf{R}\in\mathbb{R}^{N\times N}$ represents the transmit-side correlation matrix with $[\mathbf{R}]_{m,n}=R_{m,n}$, given in (\ref{eq:jakes_correlation}) above.

\subsection{Signal Detection}
Given the selected port set $\mathcal{I}_a=\{\pi_1,\pi_2,\ldots,\pi_{N_{\rm PS}}\}$, the receiver jointly detects the active port index and the modulation symbol. With perfect channel state information at the receiver, ML detection can be expressed as
\begin{equation}\label{eq:ml_detection}
(\hat{i}, \hat{s}) = \arg \min_{i \in \mathcal{I}_a, s \in \mathcal{S}} \|\mathbf{y} - \sqrt{\gamma} \mathbf{h}_i s\|_2^2,
\end{equation}
where $\hat{i}$ and $\hat{s}$ denote the estimated port index and modulation symbol, respectively.

The reliability of \eqref{eq:ml_detection} depends on the geometry of the effective channel submatrix $\mathbf{H}_{\mathcal{I}_a}=[\mathbf{h}_{\pi_1},\ldots,\mathbf{h}_{\pi_{N_{\rm PS}}}]$. If the selected ports are strongly correlated, the corresponding channel vectors become nearly aligned. Different spatial indices may then produce similar received signal vectors, which reduces the minimum distance of the joint constellation and increases the probability of index errors. A reliable port set should therefore provide both sufficient channel gain and sufficient spatial separation. The next section develops three correlation-aware selection methods for this purpose.

\section{Correlation-Aware Constellation Design} \label{sec:constellation_design}
This section develops a spatial-symbolic constellation design framework to address both the collinearity bottleneck and the computational intractability of dense FAS architectures. To address these challenges, we develop three low-complexity subset selection paradigms from geometric, algebraic, and physical perspectives, respectively. By systematically balancing the tradeoff between array gain and spatial separability, these algorithms can greatly improve reliability.

\begin{algorithm}[t]
\caption{Proposed SF-EDAS algorithm}
\label{alg:sf_edas}
\begin{algorithmic}[1]
\Require Channel matrix $\mathbf{H} \in \mathbb{C}^{N_{\rm r} \times N}$, subset size $N_{\rm PS}$, square quadrature amplitude modulation (QAM) alphabet $\mathcal{S}$
\Ensure Optimized active spatial subset $\mathcal{I}_{\text{SF}}$

\State Initialize pairwise distance matrix $\mathbf{D} \leftarrow \mathbf{0}_{N \times N}$

\For{$i = 1$ \textbf{to} $N-1$}
    \For{$j = i+1$ \textbf{to} $N$}
        \State Calculate the channel phase rotation via \eqref{eq:phase_rotation}
        \State Compute the pairwise distance $D_{i,j}$ via \eqref{eq:decoupled_metric}
        \State $D_{j,i} \leftarrow D_{i,j}$ 
    \EndFor
\EndFor

\State Initialize candidate pool $\mathcal{P} \leftarrow \{1, \dots, N\}$
\State $i_1 \leftarrow \arg \max_{n \in \mathcal{P}} \|\mathbf{h}_n\|_2^2$
\State $\mathcal{I}_{\text{SF}} \leftarrow \{i_1\}$, $\mathcal{P} \leftarrow \mathcal{P} \setminus \{i_1\}$

\For{$k = 2$ \textbf{to} $N_{\rm PS}$}
    \State Select $i_k = \arg \max_{n \in \mathcal{P}} \{ \min_{m \in \mathcal{I}_{\text{SF}}} D_{n,m} \}$ via \eqref{eq:sf_edas_objective_math}
    \State $\mathcal{I}_{\text{SF}} \leftarrow \mathcal{I}_{\text{SF}} \cup \{i_k\}$, $\mathcal{P} \leftarrow \mathcal{P} \setminus \{i_k\}$
\EndFor

\State \Return $\mathcal{I}_{\text{SF}}$
\end{algorithmic}
\end{algorithm}

\subsection{Proposed SF-EDAS Algorithm}
The first design directly uses the Euclidean distance of the received constellation as the selection metric. For any two distinct spatial-symbolic vectors $\mathbf{x}_{i,q}=\mathbf{e}_i s_q$ and $\mathbf{x}_{j,p}=\mathbf{e}_j s_p$, the squared Euclidean distance in the noise-free received signal space can be expressed as
\begin{equation}
    d_{\rm E}^2(\mathbf{x}_{i,q},\mathbf{x}_{j,p})
    =
    \|\mathbf{H}(\mathbf{x}_{i,q}-\mathbf{x}_{j,p})\|_2^2
    =
    \mathbf{s}^H\mathbf{G}_{i,j}\mathbf{s},
    \label{eq:quadratic_form_distance}
\end{equation}
where $\mathbf{s}=[s_q,-s_p]^T$, and
\begin{equation}
    \mathbf{G}_{i,j}
    =
    \begin{bmatrix}
    \|\mathbf{h}_i\|_2^2 & \mathbf{h}_i^H\mathbf{h}_j\\
    \mathbf{h}_j^H\mathbf{h}_i & \|\mathbf{h}_j\|_2^2
    \end{bmatrix}.
    \label{eq:sf_edas_gram}
\end{equation}
After some simple calculations, \eqref{eq:quadratic_form_distance} can be rewritten as
\begin{equation}
    d_{\rm E}^2 = |s_q|^2 \|\mathbf{h}_i\|_2^2 + |s_p|^2 \|\mathbf{h}_j\|_2^2 - 2 \operatorname{Re} \{ s_q s_p^* \mathbf{h}_j^H \mathbf{h}_i \}, \label{eq:sf_edas_expanded_distance}
\end{equation}
where the distance depends on both the channel norms and the inner product between two port channels. Highly correlated ports can reduce the distance between spatial-symbolic states, especially when the symbol phases align.

A direct search over all port subsets is not practical when $N$ is large. SF-EDAS therefore first builds a pairwise distance matrix $\mathbf{D}\in\mathbb{R}^{N\times N}$ and then selects ports by successive expansion, as detailed in Algorithm \ref{alg:sf_edas}.
For each port pair $(i,j)$, the channel inner product can be written as
\begin{equation}\label{eq:phase_rotation}
\mathbf{h}_{j}^{H}\mathbf{h}_{i}=\left|\mathbf{h}_{j}^{H}\mathbf{h}_{i}\right|e^{\mathrm{j}\phi_{i,j}},
\end{equation}
where $\phi_{i,j}$ represents the phase of the channel inner product. This phase term identifies the relative symbol alignment that can
reduce the pairwise Euclidean distance. For square QAM, let $\mathcal{A}$ denote the corresponding one-dimensional pulse amplitude modulation (PAM) alphabet, i.e., $\mathcal{S}=\left\{s_I+\mathrm{j}s_Q\mid s_I,s_Q\in\mathcal{A}\right\},  |\mathcal{A}|=\sqrt{M}$. To avoid an exhaustive search over all $M^2$ symbol pairs for each candidate port pair, SF-EDAS adopts a phase-aligned separable metric. Based on \eqref{eq:sf_edas_expanded_distance}, the pairwise distance metric is defined as
\begin{equation}\label{eq:decoupled_metric}
\begin{aligned}
    D_{i,j}
    \triangleq
    &\min_{a_I,b_I\in\mathcal{A}}
    \Big(
    a_I^{2}\|\mathbf{h}_i\|_2^{2}
    + b_I^{2}\|\mathbf{h}_j\|_2^{2}
    -2a_I b_I
    \left|\mathbf{h}_{j}^{H}\mathbf{h}_{i}\right|
    \Big)  \\
    &+
    \min_{a_Q,b_Q\in\mathcal{A}}
    \Big(
    a_Q^{2}\|\mathbf{h}_i\|_2^{2}
    + b_Q^{2}\|\mathbf{h}_j\|_2^{2}
    -2a_Q b_Q
    \left|\mathbf{h}_{j}^{H}\mathbf{h}_{i}\right|
    \Big).
\end{aligned}
\end{equation}
The two minimizations are performed over the in-phase and quadrature PAM components, respectively.

After the matrix $\mathbf{D}$ is obtained, SF-EDAS selects the active subset in a greedy max-min manner. The subset is initialized with the port having the largest channel norm. If $k-1$ ports have been selected, the $k$-th port is chosen as
\begin{equation}\label{eq:sf_edas_objective_math}
i_k = \arg \max_{n \notin \mathcal{I}_{\text{SF}}^{(k-1)}} \min_{m \in \mathcal{I}_{\text{SF}}^{(k-1)}} D_{n,m}.
\end{equation}
The resulting subset, from this approach, tends to keep the received constellation well separated.

\begin{remark}
SF-EDAS improves the minimum distance of the joint spatial-symbolic constellation and is therefore effective in reducing index ambiguity caused by correlated ports. However, its metric is dominated by the worst-case pairwise distance. As a result, some high-gain ports may be discarded if their received signal vectors are too close to those of the already selected ports. This leads to a tradeoff between distance maximization and channel-gain selection.
\end{remark}

\begin{algorithm}[t]
\caption{Proposed SOPS algorithm}
\label{alg:sops}
\begin{algorithmic}[1]
\Require Channel matrix $\mathbf{H} \in \mathbb{C}^{N_{\rm r} \times N}$, subset size $N_{\rm PS}$
\Ensure Decoupled active spatial subset $\mathcal{I}_{\text{SOPS}}$

\State Initialize candidate pool $\mathcal{P} \leftarrow \{1, \dots, N\}$
\State $i_1 \leftarrow \arg \max_{n \in \mathcal{P}} \|\mathbf{h}_n\|_2^2$ 
\State $\mathcal{I}_{\text{SOPS}} \leftarrow \{i_1\}$, $\mathcal{P} \leftarrow \mathcal{P} \setminus \{i_1\}$

\State Initialize  $\mathbf{P}_{\bot} \leftarrow \mathbf{I}_{N_r} - \displaystyle \frac{\mathbf{h}_{i_1} \mathbf{h}_{i_1}^H}{\|\mathbf{h}_{i_1}\|_2^2}$

\For{$k = 2$ \textbf{to} $N_{\rm PS}$}
    \State $v_n \leftarrow \| \mathbf{P}_{\bot} \mathbf{h}_n \|_2^2, \quad \forall n \in \mathcal{P}$
    \State $i_k \leftarrow \arg \max_{n \in \mathcal{P}} v_n$
    \State $\mathcal{I}_{\text{SOPS}} \leftarrow \mathcal{I}_{\text{SOPS}} \cup \{i_k\}$, $\mathcal{P} \leftarrow \mathcal{P} \setminus \{i_k\}$

    \If{$v_{i_k} > \epsilon$} 
        \State Update $\mathbf{P}_{\bot} \leftarrow \mathbf{P}_{\bot} - \displaystyle \frac{\mathbf{P}_{\bot} \mathbf{h}_{i_k} \mathbf{h}_{i_k}^H \mathbf{P}_{\bot}}{v_{i_k}}$ 
    \EndIf
\EndFor

\State \Return $\mathcal{I}_{\text{SOPS}}$
\end{algorithmic}
\end{algorithm}

\subsection{Proposed SOPS Algorithm}
Although the SF-EDAS technique improves the Euclidean distance, it requires a pairwise distance matrix whose computation scales quadratically with $N$. To reduce the dependence on symbol-level distance search, SOPS in Algorithm \ref{alg:sops} selects ports according to their contribution outside the subspace spanned by the already selected ports.

Specifically, the first selected port is chosen as the strongest channel vector:
\begin{equation}\label{eq:sops_init}
i_1 = \arg \max_{n \in \{1, \dots, N\}} \|\mathbf{h}_n\|_2^2.
\end{equation}
Let $\mathcal{V}_{k-1}$ denote the subspace spanned by the first $k-1$ selected channel vectors. The projection matrix onto the orthogonal complement of this subspace is given by
\begin{equation}\label{eq:sops_proj}
\mathbf{P}_{\bot}^{(k-1)} = \mathbf{I}_{N_r} - \mathbf{H}^{(k-1)} ( (\mathbf{H}^{(k-1)})^H \mathbf{H}^{(k-1)} )^{-1} (\mathbf{H}^{(k-1)})^H,
\end{equation}
where $\mathbf{H}^{(k-1)}$ contains the previously selected channel vectors. For an unselected port $n$, the projection energy is
\begin{equation}\label{eq:sops_innov}
E_{\mathrm{innov}}^{(k-1)}(n) = \| \mathbf{P}_{\bot}^{(k-1)} \mathbf{h}_n \|_2^2 = \mathbf{h}_n^H \mathbf{P}_{\bot}^{(k-1)} \mathbf{h}_n.
\end{equation}
SOPS then selects the port with the largest projection energy. To avoid repeated matrix inversion, the projection matrix can be updated recursively as
\begin{equation}\label{eq:sops_greville}
\mathbf{P}_{\bot}^{(k)} = \mathbf{P}_{\bot}^{(k-1)} - \frac{\mathbf{P}_{\bot}^{(k-1)} \mathbf{h}_{i_k} \mathbf{h}_{i_k}^H \mathbf{P}_{\bot}^{(k-1)}}{\mathbf{h}_{i_k}^H \mathbf{P}_{\bot}^{(k-1)} \mathbf{h}_{i_k}}.
\end{equation}

\begin{remark}
SOPS improves the conditioning of the selected channel matrix by choosing ports that add large projection energy outside the current selected subspace. The method is independent of the modulation alphabet and avoids the pairwise symbol search required by SF-EDAS. However, the projection metric does not directly preserve the original channel norm. In compact apertures, SOPS may select low-gain ports to satisfy the orthogonality requirement, which can lead to a power loss.
\end{remark}

\begin{algorithm}[t]
\caption{Proposed CC-COAS algorithm}
\label{alg:cc_coas}
\begin{algorithmic}[1]
\Require Channel matrix $\mathbf{H} \in \mathbb{C}^{N_{\rm r} \times N}$, subset size $N_{\rm PS}$, threshold candidates $\mathcal{R} = \{\rho_1, \dots, \rho_{L}\}$ where $\mathcal R\subset[0,1)$.
\Ensure Optimal active spatial subset $\mathcal{I}_{\text{opt}}$

\State $E_n \leftarrow \|\mathbf{h}_n\|_2, \quad \forall n \in \{1, \dots, N\}$
\State Obtain sorted index vector $E_{\pi_1} \ge E_{\pi_2} \ge \dots \ge E_{\pi_N}$

\State Initialize $M_{\text{opt}} \leftarrow -\infty$ and $\mathcal{I}_{\text{opt}} \leftarrow \emptyset$

\For{each threshold $\rho_l \in \mathcal{R}$}
    \State Initialize temporary subset $\mathcal{I}_{\text{tmp}} \leftarrow \{\pi_1\}$, $idx \leftarrow 2$

    \While{$|\mathcal{I}_{\text{tmp}}| < N_{\rm PS}$ \textbf{and} $idx \le N$}
        \State $k \leftarrow \pi_{idx}$
        \State $\phi_{\max} \leftarrow \max\limits_{m \in \mathcal{I}_{\text{tmp}}} \displaystyle \frac{|\mathbf{h}_k^H \mathbf{h}_m|}{\|\mathbf{h}_k\|_2 \|\mathbf{h}_m\|_2}$

        \If{$\phi_{\max} \le \rho_l$}
            \State $\mathcal{I}_{\text{tmp}} \leftarrow \mathcal{I}_{\text{tmp}} \cup \{k\}$
        \EndIf
        \State $idx \leftarrow idx + 1$
    \EndWhile

    \If{$|\mathcal{I}_{\text{tmp}}| = N_{\rm PS}$}
       \State Compute $M_{\rm tmp}=C(\mathcal I_{\rm tmp})$ using \eqref{eq:capacity_fas}
        \If{$M_{\text{tmp}} > M_{\text{opt}}$}
            \State $M_{\text{opt}} \leftarrow $ $M_{\text{tmp}}$
            \State $\mathcal{I}_{\text{opt}} \leftarrow \mathcal{I}_{\text{tmp}}$
        \EndIf
    \EndIf
\EndFor

\State \Return $\mathcal{I}_{\text{opt}}$
\end{algorithmic}
\end{algorithm}

\begin{figure*}[t]
\centering
\includegraphics[width=0.95\linewidth]{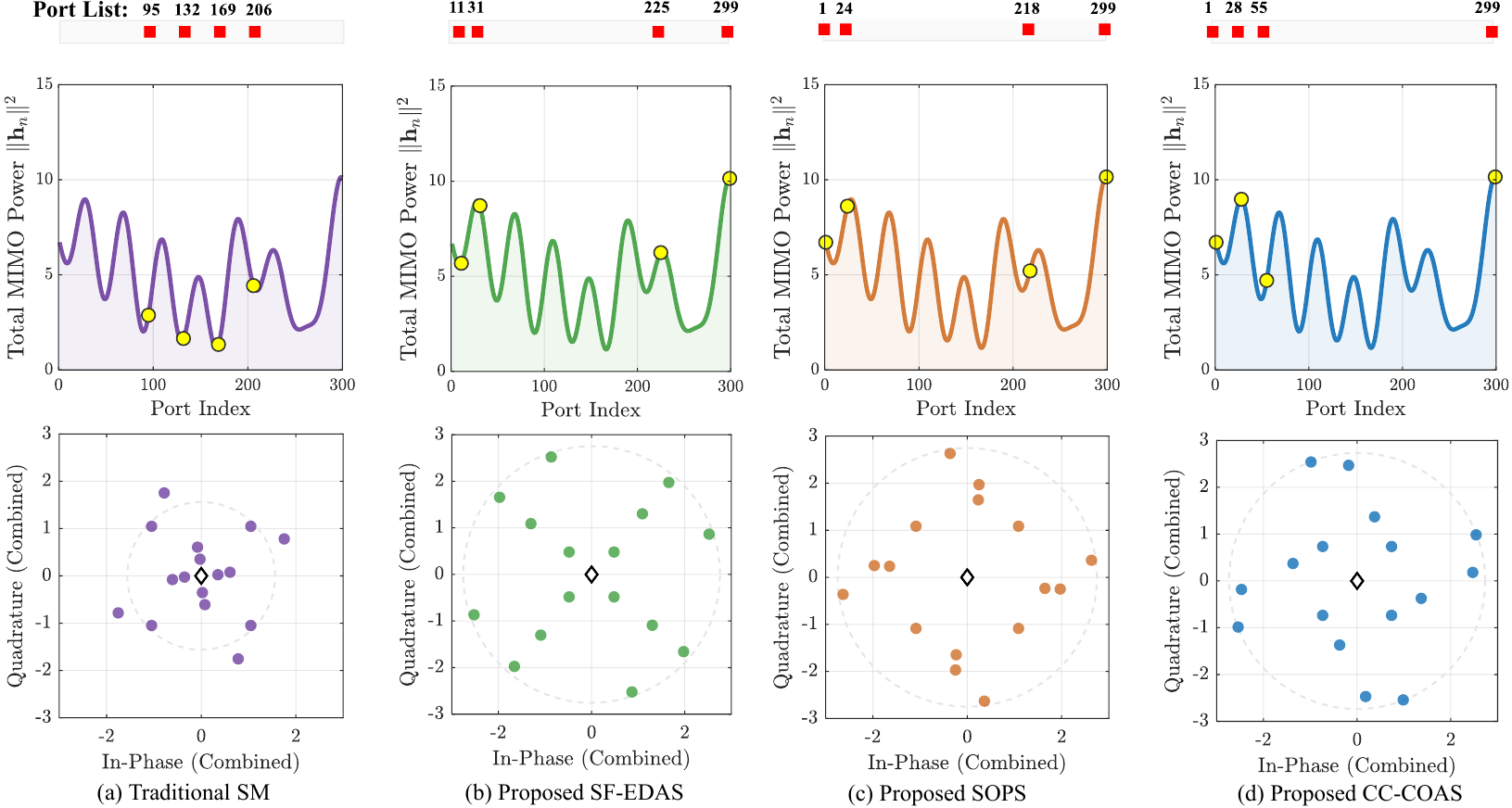}
\caption{Mechanism and decision geometry analysis under different design criteria ($W=4\lambda, N_{\rm PS}=M=N_{\rm r}=4$).}\label{fig:mechanism}
\end{figure*}

\subsection{Proposed CC-COAS Algorithm}
CC-COAS is designed to balance channel-gain selection and spatial-rank preservation. For a selected port set $\mathcal{I}_a$, the equivalent MIMO capacity can be computed as
\begin{equation}\label{eq:capacity_fas}
C(\mathcal{I}_a) = \log_{2} \det \left( \mathbf{I}_{N_r} + \frac{\gamma}{N_{\rm PS}} \mathbf{H}_{\mathcal{I}_a} \mathbf{H}_{\mathcal{I}_a}^H \right).
\end{equation}
For fixed column norms, the Hadamard inequality shows that a better-conditioned channel matrix is obtained when the selected columns are close to orthogonal. In a FAS aperture, however, strict orthogonality may force the selection of weak ports. CC-COAS therefore uses a gain-aware correlation constraint rather than a pure orthogonality rule.

The normalized channel correlation between ports $i$ and $j$ is defined as
\begin{equation}\label{eq:cc_coas_metric}
\phi(i,j) = \frac{|\mathbf{h}_{i}^{H} \mathbf{h}_{j}|}{\|\mathbf{h}_{i}\|_{2} \|\mathbf{h}_{j}\|_{2}}.
\end{equation}
Let $\{\pi_1,\ldots,\pi_N\}$ denote the port indices sorted in descending order of $\|\mathbf{h}_n\|_2$. For a given threshold $\rho_l$, CC-COAS scans this ordered list and adds candidate $\pi_n$ to the temporary subset $\mathcal{I}_{\mathrm{tmp}}$ if
\begin{equation}\label{eq:cc_coas_constraint}
\max_{m \in \mathcal{I}_{\text{tmp}}} \phi(\pi_{n}, m) \le \rho_l.
\end{equation}
The threshold $\rho_l$ is a parameter that controls the balance between channel gain and spatial separation. A smaller threshold enforces stronger decorrelation, while a larger threshold allows more high-gain ports to be retained.

\begin{lemma}\label{lemma:rank_preservation}
Suppose that the threshold satisfies $0\le \rho_l<1$. Under the constraint in \eqref{eq:cc_coas_constraint}, any selected port pair remains non-collinear, and the corresponding pairwise Gram matrix $\mathbf C_{i,j}$ is positive definite. Equivalently,
\begin{equation}\label{eq:lambda_min_positivity}
\lambda_{\min}(\mathbf{C}_{i,j})>0.
\end{equation}
\end{lemma}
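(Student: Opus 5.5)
The plan is to take $\mathbf C_{i,j}$ to be the $2\times 2$ pairwise Gram matrix of the two selected channel vectors, i.e., the matrix $\mathbf G_{i,j}$ in \eqref{eq:sf_edas_gram}, and compute its eigenvalues explicitly. Write $a=\|\mathbf h_i\|_2^2$, $b=\|\mathbf h_j\|_2^2$ and $c=|\mathbf h_i^H\mathbf h_j|$. First I will note that the CC-COAS constraint \eqref{eq:cc_coas_constraint} only makes sense when the normalized correlation \eqref{eq:cc_coas_metric} is defined, i.e., $a,b>0$. Almost surely this holds, since $\mathbf h_n=\mathbf H_w\mathbf R^{1/2}\mathbf e_n$ is a nondegenerate complex Gaussian vector whenever $[\mathbf R]_{n,n}=J_0(0)=1>0$. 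I will state this as a standing assumption, since zero-norm ports are never meaningful candidates.

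The second step is the key inequality. Because the two ports were both admitted to $\mathcal I_{\rm tmp}$, the one added later was compared against the one already present. Hence $\phi(i,j)\le\rho_l<1$, which gives $c\le\rho_l\sqrt{ab}<\sqrt{ab}$. By the equality case of the Cauchy--Schwarz inequality, $c=\sqrt{ab}$ holds if and only if $\mathbf h_i$ and $\mathbf h_j$ are collinear. The strict inequality therefore rules out collinearity, which is the first claim. The only delicate point is checking that every selected pair, not just consecutive ones, was tested. This follows because the constraint takes a maximum over all $m$ already in $\mathcal I_{\rm tmp}$, and $\pi_1$ is included from the start.

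Third, I will establish positive definiteness. $\mathbf C_{i,j}$ is Hermitian with positive diagonal entries $a,b$ and determinant
\begin{equation*}
\det\mathbf C_{i,j}=ab-c^2\ge (1-\rho_l^2)\,ab>0 .
\end{equation*}
By Sylvester's criterion for $2\times2$ Hermitian matrices, positive leading principal minors $a>0$ and $ab-c^2>0$ give $\mathbf C_{i,j}\succ 0$. Explicitly, the eigenvalues are
\begin{equation*}
\lambda_{\pm}=\tfrac{1}{2}\Big(a+b\pm\sqrt{(a-b)^2+4c^2}\Big),
\end{equation*}
and the condition $\lambda_{\min}>0$ is equivalent to $(a+b)^2>(a-b)^2+4c^2$, i.e., $ab>c^2$. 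This yields \eqref{eq:lambda_min_positivity}, with the quantitative bound $\lambda_{\min}\ge (1-\rho_l^2)ab/(a+b)$, obtained from $\lambda_{\min}\lambda_{\max}=\det$ and $\lambda_{\max}\le a+b$.

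I do not expect a genuine obstacle. The main care point is the degenerate case of zero channel norms, which I will dispose of via the almost-sure argument above. I will also remark that $\lambda_{\min}>0$ guarantees $d_{\rm E}^2=\mathbf s^H\mathbf C_{i,j}\mathbf s\ge\lambda_{\min}\|\mathbf s\|^2>0$ for any nonzero $\mathbf s$ in \eqref{eq:quadratic_form_distance}, linking the lemma back to constellation separability.
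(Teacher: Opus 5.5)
Your proof is correct and is the paper's argument. Like the paper, you bound $\det\mathbf C_{i,j}=\|\mathbf h_i\|_2^2\|\mathbf h_j\|_2^2\bigl(1-\phi(i,j)^2\bigr)\ge(1-\rho_l^2)\|\mathbf h_i\|_2^2\|\mathbf h_j\|_2^2>0$, then obtain the same bound $\lambda_{\min}(\mathbf C_{i,j})\ge\det(\mathbf C_{i,j})/\mathrm{Tr}(\mathbf C_{i,j})$; your extra checks (nonzero norms, every selected pair tested against all earlier ones, Cauchy--Schwarz for non-collinearity, Sylvester's criterion before using $\lambda_{\max}\le a+b$) just make explicit what the paper leaves implicit.
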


\begin{proof}
See Appendix~\ref{appendix:proof_lemma}.
\end{proof}

\begin{remark}
A fixed correlation threshold gives a simple selection rule, but the best balance between channel gain and spatial separation depends on the instantaneous channel. As a consequence, CC-COAS searches over a finite set of threshold candidates $\mathcal{R}$. For each feasible threshold, the algorithm forms a candidate subset and evaluates the corresponding performance metric. The subset with the best metric is selected as $\mathcal{I}_{\mathrm{opt}}$, as shown in Algorithm~\ref{alg:cc_coas}.
\end{remark}

\subsection{Visual of Port-to-Constellation Mappings}
Fig.~\ref{fig:mechanism} illustrates how different port-selection rules affect the received constellation under the same channel realization. The figure compares four schemes from three views: the selected port locations, the channel-gain envelope, and the resulting received constellation.

\subsubsection{Traditional SM in \textbf{Fig.~\ref{fig:mechanism}(a)}}
Traditional SM uses a fixed and uniformly spaced port set. Since the selected ports do not adapt to the instantaneous channel, some of them may lie in low-gain regions. The corresponding received points are then closer to the origin, which reduces the minimum Euclidean distance and degrades the average error performance.

\subsubsection{SF-EDAS in \textbf{Fig.~\ref{fig:mechanism}(b)}}
SF-EDAS selects ports by improving the worst-case Euclidean distance. This rule helps separate the spatial-index states in the received constellation. However, as the metric is distance oriented, the selected subset may not always include the strongest channel-gain peaks.

\subsubsection{SOPS in \textbf{Fig.~\ref{fig:mechanism}(c)}}
SOPS selects ports according to their projection energy outside the subspace spanned by the already selected ports. This rule improves the conditioning of the selected channel matrix and reduces spatial collinearity. But as the aperture is small, the orthogonality requirement may lead to the selection of ports with weak channel gains.

\subsubsection{CC-COAS in \textbf{Fig.~\ref{fig:mechanism}(d)}}
CC-COAS provides a more balanced selection. It first gives priority to high-gain ports and then rejects candidates that are too correlated with the current subset. As a result, the selected ports maintain both sufficient received power and adequate spatial separation. This explains why CC-COAS achieves a favorable tradeoff between reliability and complexity in the numerical results.

\subsection{Complexity Analysis}
In this subsection, the computational complexity is analyzed in terms of the dominant operations per coherence block. The main parameters are the number of candidate ports in the transmit FAS $N$, the number of receive FPAs $N_{\rm r}$, the selected subset size $N_{\rm PS}$, and the modulation order $M$.

\subsubsection{Complexity of SF-EDAS}
SF-EDAS first computes the pairwise distance matrix and then performs greedy max-min selection. The distance-matrix computation requires $\mathcal{O}(N^2MN_{\rm r})$ operations, while the subset expansion over $N_{\rm PS}$ iterations requires $\mathcal{O}(NN_{\rm PS}^2)$ scalar operations. The dominant complexity is therefore
$T_{\mathrm{SF\text{-}EDAS}}=\mathcal{O}(N^2MN_{\rm r}+NN_{\rm PS}^2).$

\subsubsection{Complexity of SOPS}
SOPS avoids symbol-level distance search but requires projection operations. Computing the projection energy and updating the projection matrix require $\mathcal{O}(N_{\rm r}^2)$ operations per candidate in each iteration. The dominant complexity is $T_{\mathrm{SOPS}}=\mathcal{O}(NN_{\rm PS}N_{\rm r}^2).$

\subsubsection{Complexity of CC-COAS}
CC-COAS first sorts the channel norms and then checks pairwise correlations under a finite set of thresholds. Computing and sorting the channel norms require $\mathcal{O}(NN_{\rm r}+N\log N)$ operations. Since the number of threshold candidates is fixed, the dominant complexity is $T_{\mathrm{CC\text{-}COAS}}=\mathcal{O}(N\log N+NN_{\rm PS}N_{\rm r})$.

\section{Performance Analysis} \label{sec:performance_analysis}
This section analyzes the reliability of the proposed Tx-SIMO-FAS SM architecture. We first characterize the effective spatial DoFs imposed by the finite aperture. We then derive an algorithm-aware PEP expression and a closed-form static baseline. Finally, we study the high-SNR behavior to characterize the static diversity order, the predicted effective diversity slope under adaptive port selection, and the aperture-limited extreme-value array gain.

\subsection{Fundamental Spatial DoFs and Decoupled Limits} \label{subsec:spatial_dofs}
Although a FAS may contain a large number of candidate ports, the spatial correlation imposed by a finite aperture limits the number of effectively distinguishable channel observations. In the following, we employ two effective DoF measures to separately characterize the diversity-related energy dispersion and the array-gain-related peak-tracking capability.

\begin{definition} \label{prop:decoupled_dof_final}
The energy-based spatial DoF is defined from the eigenvalues of the transmit-side spatial covariance matrix ${\bf R}$ as
\begin{equation}\label{eq:ds_result_condensed}
D_s = \frac{\mathrm{Tr}(\mathbf{R})^2}{\|\mathbf{R}\|_F^2}.
\end{equation}
The effective extreme-value DoF $N_{\mathrm{EV}}$ describes the number of independent spatial looks available for tracking fading peaks. For a finite number of candidate ports $N$ within a continuous aperture of normalized size $W$, it can be approximated by
\begin{equation}\label{eq:nev_result_condensed}
N_{\mathrm{EV}} \approx \min\{N,\,3.21W+1\}.
\end{equation}
\end{definition}

\begin{proof}
See Appendix \ref{App:Proof_DoF}.
\end{proof}

\begin{remark}
These two DoFs describe different roles of the aperture. The energy-based DoF $D_s$ is related to the diversity gain as it measures the effective rank of the spatial covariance matrix. In contrast, $N_{\mathrm{EV}}$ is related to the array gain because it measures how many nearly independent fading peaks can be observed. Thus, increasing the port density does not create unlimited spatial gain. Once the continuous aperture is sufficiently sampled, the gain is limited by the aperture size.
\end{remark}

\subsection{Reliability Evaluation and Analytical Bounding} \label{subsec:reliability_eval}
\subsubsection{Algorithm-Aware Tight PEP} \label{subsubsec:tight_pep}
Let $\mathbf{x}_{i,q}=\mathbf{e}_i s_q$ denote the transmitted spatial-symbolic vector. Under the ML detector in \eqref{eq:ml_detection}, the instantaneous PEP from $\mathbf{x}_{i,q}$ to $\mathbf{x}_{j,p}$ is found as
\begin{equation}\label{eq:conditional_q_function}
\begin{aligned}
&P(\mathbf{x}_{i,q}\to\mathbf{x}_{j,p}\mid \mathbf{H})\\ &= \Pr\left( \|\mathbf{y}-\sqrt{\gamma}\mathbf{h}_j s_p\|_2^2 < \|\mathbf{y}-\sqrt{\gamma}\mathbf{h}_i s_q\|_2^2 \,\middle|\,\mathbf{H} \right) \nonumber\\ &= Q\left( \sqrt{ \frac{\gamma}{2} \|\mathbf{h}_i s_q-\mathbf{h}_j s_p\|_2^2 } \right).
\end{aligned}
\end{equation}
For a given port-selection algorithm, the selected ports depend on the instantaneous channel realization. The average PEP should therefore be taken over the channel statistics induced by the actual selected subset:
\begin{equation}\label{eq:tight_pep}
P_{\text{PEP}}^{\rm tight}(\mathbf{x}_{i,q} \to \mathbf{x}_{j,p}) = \mathbb{E}_{\mathbf{H}} \left[ Q \left( \sqrt{ \frac{\gamma}{2} \|\mathbf{h}_i s_q - \mathbf{h}_j s_p\|_2^2 } \right) \right],
\end{equation}
which captures the selection gain obtained by the proposed algorithms. Deriving an exact closed-form expression is mathematically intractable since the selected channel vectors are highly non-linear functions of the instantaneous fading realizations under the proposed port-selection rules. To isolate the spatial selection gain, we derive a tractable static baseline.

\subsubsection{Static Baseline for Spatial Correlation Penalty} \label{subsubsec:loose_pep}
We define a static baseline to isolate the gain obtained from adaptive port selection. The baseline uses a fixed selected port set and does not track favorable fading peaks across the fluid aperture. It therefore provides a reference for understanding the reliability loss caused by spatial correlation.

Letting $\mathbf{\Delta}=\mathbf{x}-\hat{\mathbf{x}}$ denote the difference between two spatial-symbolic vectors, we obtain
\begin{align}
P_{\text{PEP}}^{\text{sta}}(\mathbf{x} \to \hat{\mathbf{x}}) &= \Pr \left( \|\mathbf{n}\|_2^2 > \|\mathbf{n} + \sqrt{\gamma}\mathbf{H}\mathbf{\Delta}\|_2^2 \right) \nonumber \\
&= \Pr \left( \|\mathbf{n}\|_2^2 > \|\mathbf{r}_{x}\|_2^2 \right) = \Pr(D > 0), \label{eq:pep_fas_raw}
\end{align}
where $D \triangleq \|\mathbf{n}\|_2^2-\|\mathbf{r}_{x}\|_2^2$ can be equivalently expressed as
\begin{equation}\label{eq:quadratic_form}
D = \mathbf{z}^H \mathbf{\Gamma} \mathbf{z},
\end{equation}
with the augmented vector defined as $\mathbf{z} = [\mathbf{n}^T, \mathbf{r}_x^T]^T \in \mathbb{C}^{2N_r \times 1}$ and the signature matrix $\mathbf{\Gamma} = \mathrm{diag}\{\mathbf{I}_{N_r}, -\mathbf{I}_{N_r}\}$.

Let us define $\mathbf{g} = \mathbf{H}\mathbf{\Delta} = \mathbf{H}_w \mathbf{R}^{1/2} \mathbf{\Delta} \sim \mathcal{CN}(\mathbf{0}, \mathbf{\Sigma}_g)$. Then its covariance matrix is given by
\begin{align}
\mathbf{\Sigma}_g &= \mathbb{E}[\mathbf{H}_w \mathbf{R}^{1/2} \mathbf{\Delta} \mathbf{\Delta}^H (\mathbf{R}^{1/2})^H \mathbf{H}_w^H] \triangleq \delta_x^2 \mathbf{I}_{N_r}, \label{eq:variance_g_derivation}
\end{align}
where the effective distance parameter $\delta_{x}^{2}$ is explicitly coupled with the spatial correlation coefficient $[\mathbf{R}]_{i,j}$ as
\begin{equation}\label{eq:delta_x_final}
\delta_{x}^{2} = |s_{q}|^{2} + |s_{p}|^{2} - 2\text{Re}\{s_{q}s_{p}^{*}[\mathbf{R}]_{i,j}\}.
\end{equation}

\begin{remark}
The expression~\eqref{eq:delta_x_final} explains why strong inter-port correlation degrades SM detection. When $s_q=s_p$, the effective distance becomes
$\delta_x^2=2|s_q|^2(1-[\mathbf{R}]_{i,j})$. If the two selected ports are highly correlated, i.e., $[\mathbf{R}]_{i,j}\rightarrow 1$, this distance approaches zero. The receiver then has difficulty distinguishing the two port indices. This observation supports the use of correlation-aware port selection.
\end{remark}

\begin{lemma} \label{lemma:covariance_z_fas}
Assuming $\mathbb{E}[\mathbf{n}\mathbf{n}^H]=\mathbf{I}_{N_r}$, the augmented vector $\mathbf{z}$ is circularly symmetric complex Gaussian with covariance matrix
\begin{equation}\label{eq:covariance_z}
\mathbf{\Sigma}_z = \begin{bmatrix} 1 & 1 \\ 1 & 1 + \gamma \delta_x^2 \end{bmatrix} \otimes \mathbf{I}_{N_r}.
\end{equation}
\end{lemma}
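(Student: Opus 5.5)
The plan is to treat $\mathbf{z}$ as a linear image of two independent circularly symmetric Gaussian vectors and read off its covariance blockwise. From \eqref{eq:pep_fas_raw}, the residual is $\mathbf{r}_x=\mathbf{n}+\sqrt{\gamma}\,\mathbf{H}\mathbf{\Delta}=\mathbf{n}+\sqrt{\gamma}\,\mathbf{g}$. Stacking gives
\begin{equation*}
\mathbf{z}=\begin{bmatrix}\mathbf{n}\\ \mathbf{r}_x\end{bmatrix}
=\left(\begin{bmatrix}1 & 0\\ 1 & \sqrt{\gamma}\end{bmatrix}\otimes\mathbf{I}_{N_r}\right)\begin{bmatrix}\mathbf{n}\\ \mathbf{g}\end{bmatrix}.
\end{equation*}
This is a deterministic linear map applied to $[\mathbf{n}^T,\mathbf{g}^T]^T$. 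In the static baseline, $\mathbf{H}$ is averaged over, so $\mathbf{g}$ is random rather than conditioned on.

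The first step is to establish that $[\mathbf{n}^T,\mathbf{g}^T]^T$ is jointly circularly symmetric complex Gaussian. The noise satisfies $\mathbf{n}\sim\mathcal{CN}(\mathbf{0},\mathbf{I}_{N_r})$. Since $\mathbf{H}_w$ has i.i.d.\ $\mathcal{CN}(0,1)$ entries, each entry of $\mathbf{g}=\mathbf{H}_w(\mathbf{R}^{1/2}\mathbf{\Delta})$ is a fixed linear combination of circular Gaussians. Hence $\mathbf{g}$ is circular Gaussian, with $\mathbb{E}[\mathbf{g}\mathbf{g}^T]=\mathbf{0}$ and covariance $\mathbf{\Sigma}_g=\delta_x^2\mathbf{I}_{N_r}$ from \eqref{eq:variance_g_derivation}.

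I will verify that scalar explicitly. Row $k$ of $\mathbf{H}_w$ gives $\mathbb{E}[g_kg_l^*]=\delta_{kl}\,\mathbf{\Delta}^H\mathbf{R}\mathbf{\Delta}$. Using $\mathbf{\Delta}=\mathbf{e}_is_q-\mathbf{e}_js_p$, $[\mathbf{R}]_{ii}=1$, and real symmetric $\mathbf{R}$, this quadratic form expands to \eqref{eq:delta_x_final}. Because $\mathbf{n}$ and $\mathbf{H}_w$ are independent, $\mathbf{n}$ and $\mathbf{g}$ are independent, so the joint vector is circular Gaussian with covariance $\mathrm{diag}\{1,\delta_x^2\}\otimes\mathbf{I}_{N_r}$. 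Linear maps preserve circular Gaussianity, which yields the distributional claim for $\mathbf{z}$.

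The second step computes the blocks directly, using independence to drop the cross terms:
\begin{itemize}
\item $\mathbb{E}[\mathbf{n}\mathbf{n}^H]=\mathbf{I}_{N_r}$;
\item $\mathbb{E}[\mathbf{n}\mathbf{r}_x^H]=\mathbb{E}[\mathbf{n}\mathbf{n}^H]+\sqrt{\gamma}\,\mathbb{E}[\mathbf{n}\mathbf{g}^H]=\mathbf{I}_{N_r}$;
\item $\mathbb{E}[\mathbf{r}_x\mathbf{r}_x^H]=\mathbf{I}_{N_r}+\gamma\delta_x^2\mathbf{I}_{N_r}$.
\end{itemize}
Equivalently, one can form $\mathbf{A}\,(\mathrm{diag}\{1,\delta_x^2\}\otimes\mathbf{I})\,\mathbf{A}^H$, where $\mathbf{A}$ is the Kronecker map above. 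Using the mixed-product property, this gives \eqref{eq:covariance_z}.

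The only delicate point is interpretive rather than computational. The lemma must be read with $\mathbf{H}$ random, not conditioned on, since otherwise $\mathbf{r}_x$ would have a nonzero mean. This is consistent with the static-baseline setup, where $\mathbf{\Sigma}_g$ is defined as an expectation over $\mathbf{H}_w$. I will state this explicitly, along with the real symmetric and unit-diagonal form of $\mathbf{R}$ implied by \eqref{eq:jakes_correlation}, which is what makes $\delta_x^2$ real and nonnegative.
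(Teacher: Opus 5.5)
Your proposal is correct and follows the paper's proof. The paper also substitutes $\mathbf{r}_x=\mathbf{n}+\sqrt{\gamma}\,\mathbf{g}$, uses the independence of $\mathbf{n}$ and $\mathbf{g}$ to drop the cross terms, and assembles the blocks $\mathbf{I}_{N_r}$, $\mathbf{I}_{N_r}$, $(1+\gamma\delta_x^2)\mathbf{I}_{N_r}$ into the stated Kronecker form; your additional linear-map argument for joint circular Gaussianity and your explicit check that $\mathbf{\Delta}^H\mathbf{R}\mathbf{\Delta}=\delta_x^2$ are sound and supply details the paper asserts without proof.
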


\begin{proof}
See Appendix \ref{appendix:proof_lemma2}.
\end{proof}

Based on Lemma \ref{lemma:covariance_z_fas}, the moment generating function (MGF) of $D$ is given by
\begin{equation}
\mathcal{M}_D(t) = [\det(\mathbf{I}_{2N_r} - t \mathbf{\Sigma}_z \mathbf{\Gamma})]^{-1}.
\end{equation}
The two eigenvalues of the corresponding $2\times 2$ core matrix are expressed as
\begin{equation}
\lambda_1 = \frac{-\gamma \delta_x^2 + \sqrt{\gamma^2 \delta_x^4 + 4\gamma \delta_x^2}}{2},~\lambda_2 = \frac{-\gamma \delta_x^2 - \sqrt{\gamma^2 \delta_x^4 + 4\gamma \delta_x^2}}{2}.
\end{equation}
Thus, the MGF reduces to
\begin{equation}\label{mdt2}
\mathcal{M}_D(t) = [(1 - t\lambda_1)(1 - t\lambda_2)]^{-N_r}.
\end{equation}
Using \eqref{mdt2} and \eqref{eq:pep_fas_raw}, the closed-form static baseline becomes
\begin{equation}
P_{\text{PEP}}^{\text{sta}} = \left( \frac{\lambda_{1}}{\lambda_{1}\!-\!\lambda_{2}} \right)^{N_{r}} \sum_{k=0}^{N_{r}\!-\!1} \binom{N_{r}\!+\!k\!-\!1}{k} \left( \frac{-\lambda_{2}}{\lambda_{1}\!-\!\lambda_{2}} \right)^{k}.
\label{eq:pep_exact_final_fas}
\end{equation}

\subsubsection{Average Symbol Error Rate (ASER)} \label{subsubsec:generalized_asep}
The ASER, denoted by $P_E$, is bounded by the union bound over all $MN_{\rm PS}$ joint transmission states. That is,
\begin{equation}\label{eq:generalized_union_bound}
P_E \le \frac{1}{M N_{\rm PS}} \sum_{i \in \mathcal{I}_a} \sum_{s_q \in \mathcal{S}} \sum_{j \in \mathcal{I}_a} \sum_{\substack{s_p \in \mathcal{S} \\ (j, s_p) \neq (i, s_q)}} P_{\text{PEP}}(\mathbf{x}_{i,q} \to \mathbf{x}_{j,p}).
\end{equation}
The algorithm-aware bound is obtained by substituting \eqref{eq:tight_pep} into \eqref{eq:generalized_union_bound}. The static benchmark is obtained by substituting \eqref{eq:pep_exact_final_fas}. The gap between the two results reflects the gain obtained from adaptive correlation-aware port selection.

\subsection{Asymptotic Analysis and System-Level Gains} \label{subsec:asymptotic_analysis}
We next study the high-SNR behavior of the proposed system. The asymptotic analysis identifies the diversity order and the array gain, which determine the slope and the horizontal shift of the error-rate curve, respectively. The asymptotic error probability can be expressed as
\begin{equation}
P_{asy} \simeq (G_c \gamma)^{-\mathcal{D}},
\end{equation}
where $G_c$ denotes the array gain, and $D$ denotes either the exact diversity order of the static baseline or the predicted effective diversity slope of the adaptive selection scheme.

\subsubsection{Asymptotic Error Probability and Diversity Order} \label{subsubsec:asymptotic_diversity}
For the static baseline, the decision variable depends on
\begin{equation}\label{eq:Z_decomposition}
Z=\|\mathbf{g}\|_2^2=\sum_{n=1}^{N_r}|g_n|^2\sim\Gamma(N_r,\delta_x^2).
\end{equation}
Averaging the $Q$-function over $Z$, we obtain
\begin{equation}\label{eq:craig_integral}
P^{\text{UB}}(\mathbf{x} \to \hat{\mathbf{x}}) = \frac{1}{\pi} \int_{0}^{\pi/2} \left( 1 + \frac{\gamma \delta_x^2}{4 \sin^2 \theta} \right)^{-N_r} d\theta.
\end{equation}
At high SNR, applying the asymptotic expansion yields
\begin{equation}\label{eq:asymptotic_final_fas}
P_{asy}^{\text{UB}}(\mathbf{x} \to \hat{\mathbf{x}}) = \frac{1}{2} \left( \frac{4}{\gamma \delta_x^2} \right)^{N_r} \frac{(2N_r - 1)!!}{(2N_r)!!}.
\end{equation}
The local diversity order of the conventional static baseline is evaluated by the logarithmic limit as $N_{\rm r}$.

For the proposed FAS-SM architecture, the aperture provides an additional transmit-side selection effect. Due to spatial correlation, the number of effectively independent transmit branches is limited. We denote the effective branch number by
\begin{equation} L_{\rm eff}=\min\{N,\lfloor D_s\rfloor\}.
\end{equation}
Under the equivalent-independent-branch approximation and when $N_{\rm PS}\le L_{\rm eff}$, selecting $N_{\rm PS}$ ports introduces an order-statistics penalty because the reliability of the spatial alphabet is affected by the weakest retained branch. The corresponding effective high-SNR diversity slope is predicted as \begin{equation}
d_F = (L_{\rm eff}-N_{\rm PS}+1)N_r.
\end{equation}
This expression characterizes the tradeoff between spatial-index rate and selection diversity. Increasing $N_{\rm PS}$ improves the spatial-index rate, but it reduces the available selection diversity under the effective independent-branch model.

\subsubsection{Extreme-Value Array Gain}
The diversity order determines the high-SNR slope, whereas the array gain determines the horizontal shift of the error-rate curve. To characterize the aperture-limited peak-tracking gain, we consider the scalar equivalent spatial fading process. Let $g_i\sim\mathcal{CN}(0,1)$ denote the scalar fading coefficient associated with the $i$-th effective spatial look. The scalar extreme-value array gain is defined as\label{eq:GF_def}
\begin{equation} G_F \triangleq \mathbb{E}\left[ \max_{1\le i\le N_{\rm EV}} |g_i|^2 \right].
\end{equation}
Since $|g_i|^2$ follows a unit-mean exponential distribution, the scalar equivalent extreme-value gain can be approximated as
\begin{equation} \label{eq:GF}
G_F \approx \psi(N_{\rm EV}+1)+\gamma_{\rm E},
\end{equation}
where $\psi(\cdot)$ is the Digamma function and $\gamma_{\rm E}\approx 0.5772$ is the Euler-Mascheroni constant. A derivation of this scalar equivalent expression is provided in Appendix \ref{App:Proof_ArrayGain_Detailed}. For comparison, if all $N$ candidate ports were strictly independent scalar Rayleigh fading branches, the corresponding extreme-value gain would be given by
\begin{equation} \label{eq:GF_iid}
G_F^{\rm i.i.d.}(N) = \psi(N+1)+\gamma_{\rm E} = \ln N+\gamma_{\rm E}+o(1).
\end{equation}
In a finite FAS aperture, however, spatial correlation limits the effective number of independent fading peaks from $N$ to $N_{\rm EV}$. Therefore, increasing the port density improves the array gain only until the continuous aperture is sufficiently sampled.

\section{Numerical Results and Discussion} \label{sec:numerical_results}
This section evaluates the proposed FAS-SM architecture and validates the analytical results. Unless otherwise specified, quadrature phase shift keying (QPSK) modulation is used with $M=4$, the number of receive antennas is $N_{\rm r}=4$, the number of selected ports is $N_{\rm PS}=4$, the number of candidate fluid ports is $N=300$, and the normalized aperture size is $W=2\lambda$.

Fig.~\ref{fig:complexity} compares the computational complexity of the considered schemes. The exhaustive-search benchmark has combinatorial complexity because all possible port subsets must be checked. The grouping-based method in~\cite{yang2024position} has the lowest complexity, since it uses fixed subregions and avoids channel-aware subset search. Among the proposed methods, SF-EDAS has quadratic scaling with the number of candidate ports because it computes a pairwise distance matrix. SOPS reduces the symbol-level search but requires projection operations whose cost depends on $N_{\rm r}^2$. CC-COAS achieves the most favorable scaling among the proposed schemes. It only needs channel-norm sorting and correlation checking over a finite set of thresholds. Therefore, CC-COAS is more suitable for dense FAS arrays with a large number of candidate ports.

\begin{figure}[t]
\centering
\includegraphics[width=0.85\linewidth]{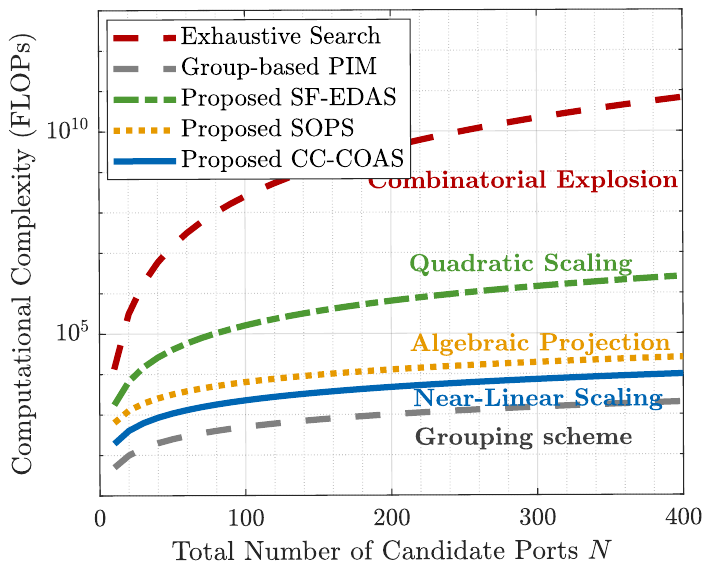}
\caption{Computational complexity in terms of floating-point operations (FLOPs) as a function of $N$.}\label{fig:complexity}
\end{figure}

\begin{figure}[t]
\centering
\subfloat[$W = \lambda$]{\includegraphics[width=0.85\columnwidth]{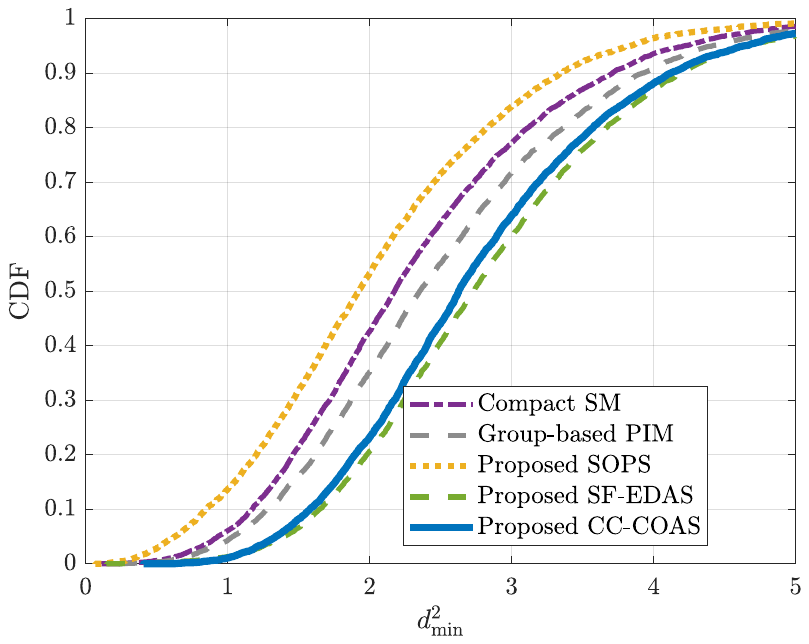}\label{fig:cdf_w1}}\hfill
\subfloat[$W = 2\lambda$]{\includegraphics[width=0.85\columnwidth]{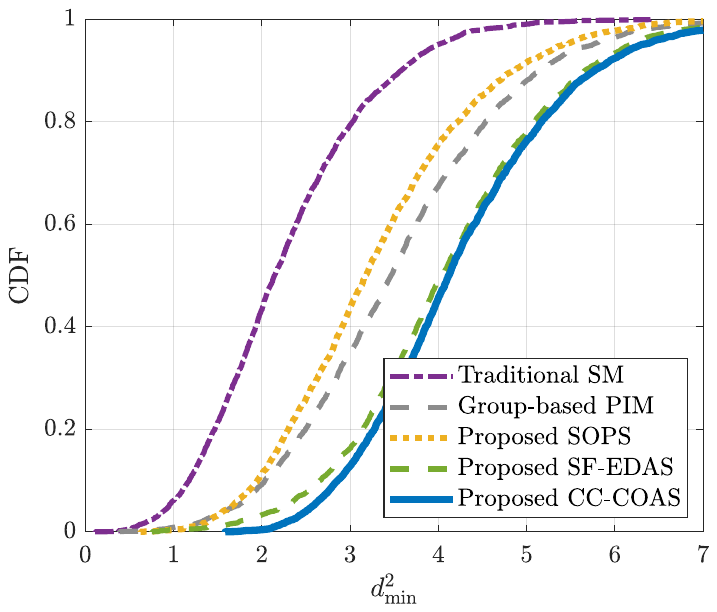}\label{fig:cdf_w2}}
\caption{Empirical CDF of the minimum squared Euclidean distance $d_{\min}^2$ evaluated under normalized spatial window sizes.}
\label{fig:cdf_comparison}
\end{figure}

In Figs.~\ref{fig:cdf_w1} and \ref{fig:cdf_w2}, we present the statistics of the minimum squared Euclidean distance to evaluate constellation separability. Under the compact aperture condition of $W=\lambda$, conventional SM exhibits the worst performance, since its fixed port configuration cannot adapt to channel fading variations. A notable observation in this tightly constrained regime is that the proposed SOPS performs statistically worse than the group-based position index modulation (PIM) scheme in \cite{yang2024position}. This is because the strict null-space projection criterion often forces the selection of ports with severely weakened channel gains to maintain orthogonality in a limited aperture. Therefore, the associated power loss outweighs the benefit of reduced spatial correlation, making this algebraic strategy less effective than the coarse grouping rule adopted in group-based PIM. When the aperture is enlarged to $W=2\lambda$, the performance ordering changes significantly, and the proposed SOPS surpasses the group-based PIM scheme. The additional spatial DoFs enable SOPS to identify nearly orthogonal ports with sufficiently strong channel gains, thereby realizing the benefit of spatial decorrelation without incurring the severe power penalty observed in the compact-aperture case.

Figs. \ref{fig:aser_w1} and \ref{fig:aser_w2} compare the ASER performance under two spatial aperture settings. Increasing the aperture from $W=\lambda$ to $W=2\lambda$ significantly improves reliability by enabling richer spatial diversity. In the strongly constrained case of $W=\lambda$, the proposed SOPS exhibits the worst performance, because its strict null-space projection criterion favors deeply faded ports to maintain orthogonality, and the associated power loss exceeds the gain from reduced correlation. With the larger aperture $W=2\lambda$, this limitation is mitigated, allowing the proposed SOPS to outperform the group-based PIM scheme in \cite{yang2024position}. In contrast, conventional SM remains the least competitive benchmark, since it fails to exploit the additional spatial DoFs. For both aperture settings, the proposed SF-EDAS and the proposed CC-COAS consistently deliver superior performance by striking a favorable balance between channel power and spatial correlation. Among them, the proposed CC-COAS closely approaches the optimal benchmark with near-linear complexity, highlighting its effectiveness in extracting spatial gains under diverse propagation conditions.

\begin{figure}[t]
\centering
\subfloat[$W = \lambda$]{\includegraphics[width=0.85\columnwidth]{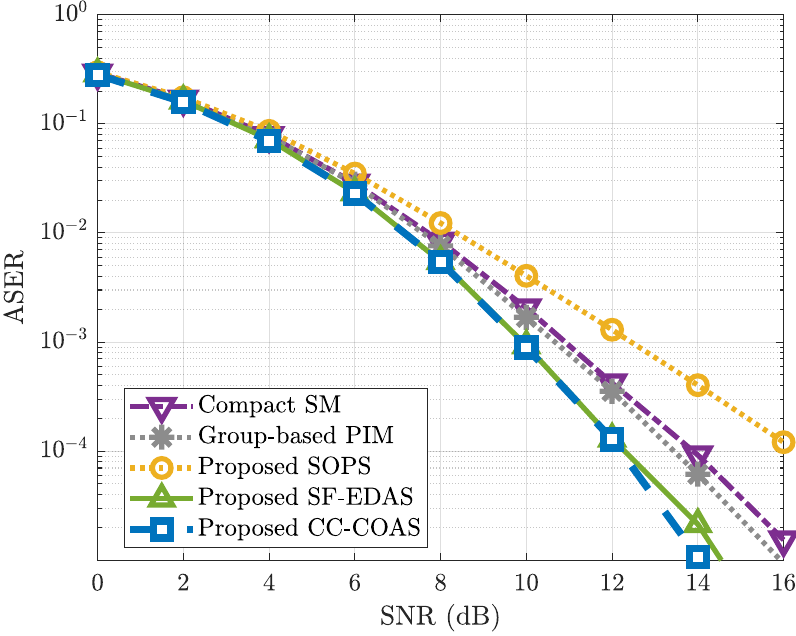}\label{fig:aser_w1}}\hfill
\subfloat[$W = 2\lambda$]{\includegraphics[width=0.85\columnwidth]{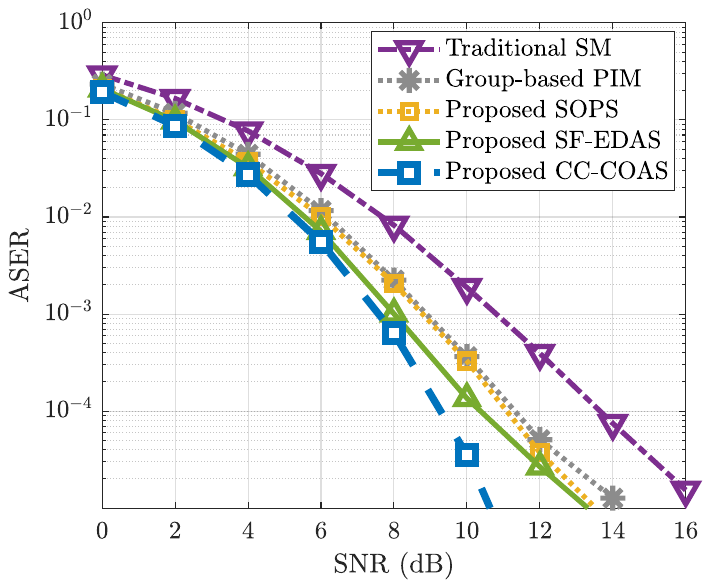}\label{fig:aser_w2}}
\caption{ASER performance evaluated under different spatial aperture constraints (SNR = 10 dB).}\label{fig:aser_comparison}
\end{figure}

\begin{figure}[t]
\centering
\subfloat[Spatial window size $W$ (SNR = 8 dB)]{\includegraphics[width=0.85\columnwidth]{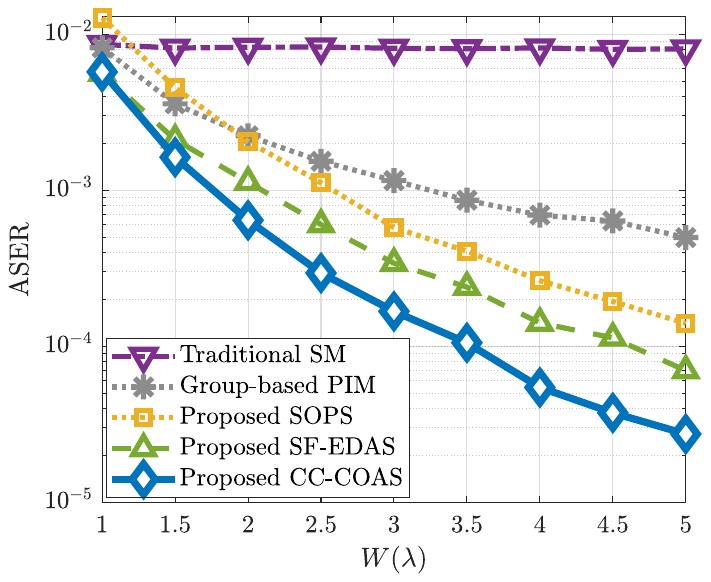}\label{fig:aser_w}}\hfill
\subfloat[Number of fluid ports $N$ (SNR = 10 dB)]{\includegraphics[width=0.85\columnwidth]{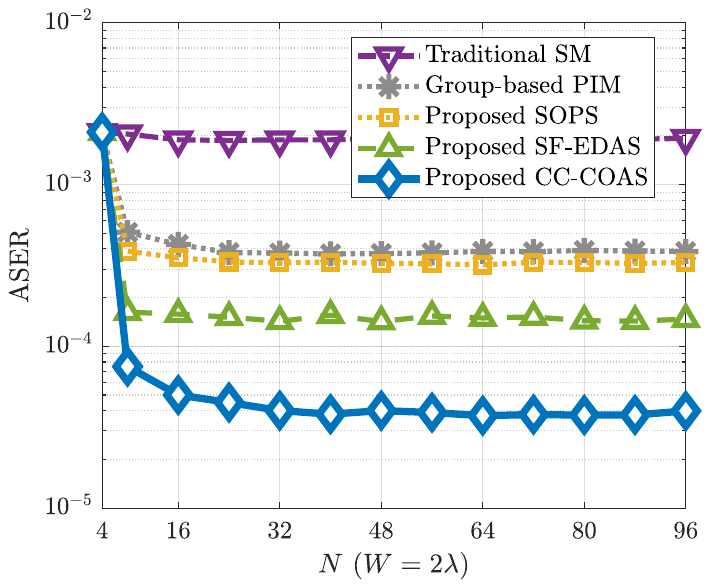}\label{fig:aser_n}}
\caption{ASER performance analysis evaluated under varying spatial aperture sizes and port densities.}\label{fig:spatial_analysis}
\end{figure}

\begin{figure}[t]
\centering
\includegraphics[width=0.85\linewidth]{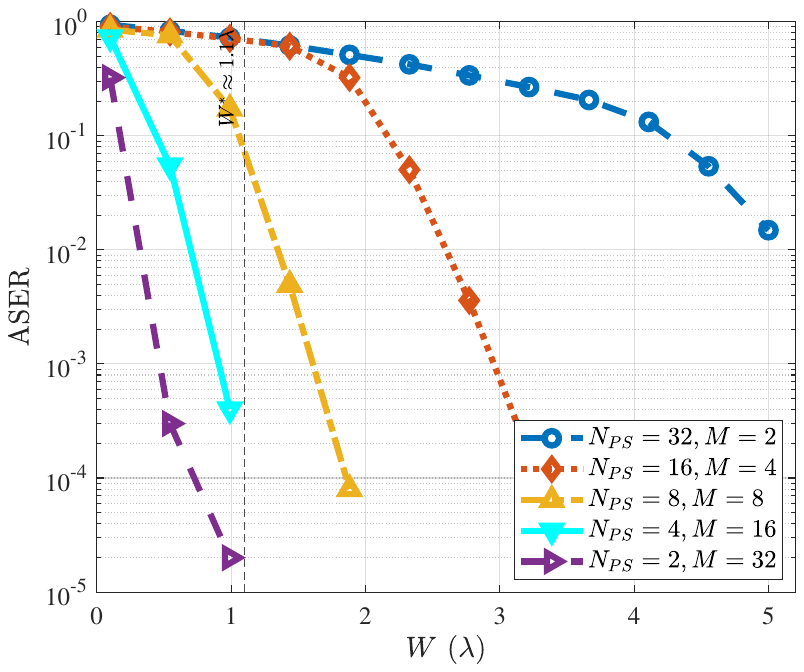}
\caption{ASER versus  $W$ for different FAS-SM configurations under $B=6$ bpcu ($\text{SNR} = 15$ dB).}\label{fig:mtrade}
\end{figure}

Figs.~\ref{fig:aser_w} and \ref{fig:aser_n} characterize the spatial diversity limits and scalability of the proposed architecture through the ASER versus the normalized window size and the total number of candidate ports. Enlarging the spatial window introduces additional spatial DoFs, which mitigates channel correlation and enhances the gain from dynamic port selection. Conventional SM remains nearly unchanged across all settings because its fixed port configuration cannot respond to instantaneous fading variations, while the group-based PIM scheme achieves only marginal improvement due to its rigid subregion partition. The proposed SOPS is ineffective under small apertures, since its strict orthogonality criterion often forces the selection of deeply faded ports. By contrast, CC-COAS achieves a clear diversity advantage by balancing spatial orthogonality and channel gain. In addition, increasing the spatial resolution enables dense fluid antenna ports to sample the continuous fading field more accurately and capture local channel peaks. Therefore, conventional SM and group-based PIM saturate early and fail to exploit the full spatial potential, whereas the proposed CC-COAS approaches the lowest theoretical bound with only a moderate number of candidate ports. This result confirms the strong scalability of the proposed CC-COAS for ultra-dense FASs without excessive computational cost.

\begin{figure*}[t]
\centering
\includegraphics[width=\textwidth]{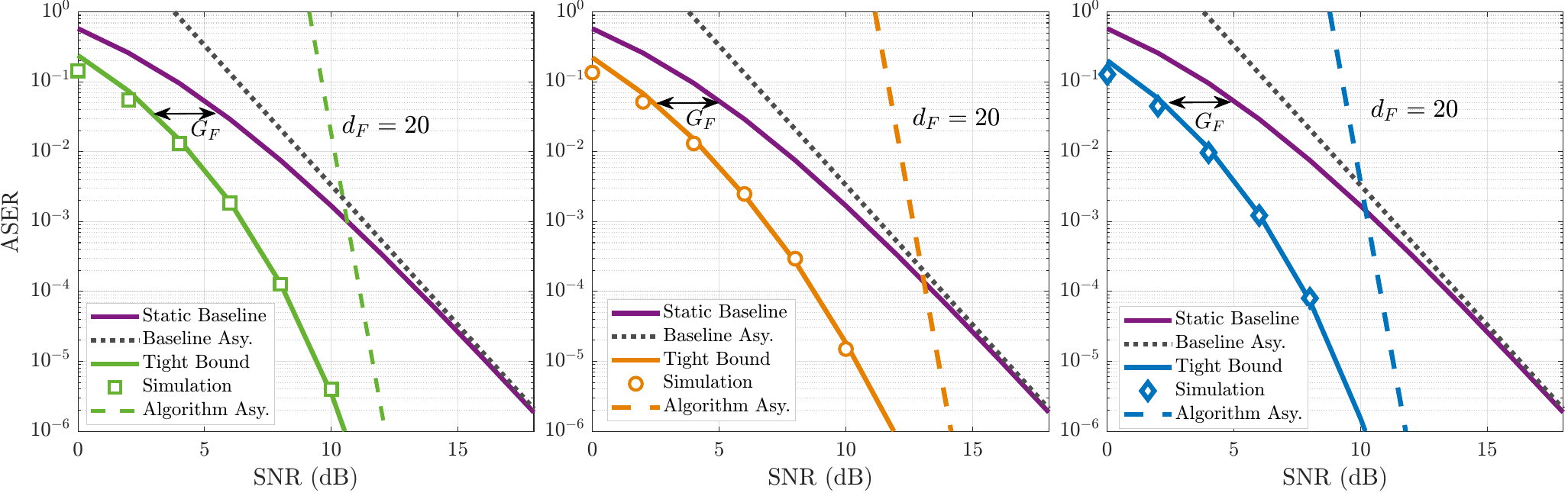}
\caption{ASER performance validation of the proposed algorithms: (a) SF-EDAS, (b) SOPS, and (c) CC-COAS.}\label{fig:performance_validation}
\end{figure*}

Fig.~\ref{fig:mtrade} illustrates the spectral-efficiency tradeoff between spatial index bits and modulation bits for the proposed Tx-SIMO-FAS SM scheme at a fixed rate of $B=6$ bpcu. In the aperture-limited regime ($W<1.1\lambda$), configurations with smaller $N_{\rm PS}$ and higher modulation orders $M$ (e.g., $N_{\rm PS}=2$, $M=32$) achieve better reliability. This is because the severely constrained aperture leads to strong spatial correlation, which degrades the separability of high-dimensional spatial constellations. In contrast, when the normalized window size exceeds the critical crossover point $W^*\approx1.1\lambda$, configurations with larger $N_{\rm PS}$ (e.g., $N_{\rm PS}=8$ and $16$) exhibit a much steeper performance improvement. This behavior indicates that a larger aperture provides richer spatial DoFs, allowing the proposed CC-COAS algorithm to exploit both extreme-value array gain and spatial diversity for more reliable index detection. Although only a single RF chain is activated, the FAS architecture effectively compensates for the reduced Euclidean distance associated with high-order modulation by leveraging the spatial agility of fluid antenna ports.

Fig.~\ref{fig:performance_validation} shows the ASER of the proposed algorithms versus the average SNR, with conventional static SM included as a benchmark. The simulation results closely match the derived analytical bounds over the entire SNR range, thereby showing good agreement with the proposed theoretical framework. In the moderate-SNR waterfall region, a clear horizontal gap can be observed between the static baseline and the proposed schemes. As indicated by the double-headed arrows, this early-stage performance gain is mainly determined by the extreme-value array gain $G_F$, which reflects the power advantage obtained by selecting the strongest fluid antenna ports before the asymptotic diversity gain becomes dominant. In the high-SNR regime, all three proposed algorithms exhibit a much steeper decay than the static benchmark. The dashed asymptotic lines confirm that the proposed schemes follow the predicted effective high-SNR slope of $d_F=20$. Although the three algorithms share the same diversity slope, slight horizontal differences remain due to their different port-selection criteria. In particular, the SOPS algorithm suffers a small power loss because of its strict successive orthogonal projection constraint, whereas both CC-COAS and SF-EDAS provide larger array gains and therefore shift the error curves further to the left, approaching near-optimal reliability.

\begin{figure}[t]
\centering
\includegraphics[width=0.85\linewidth]{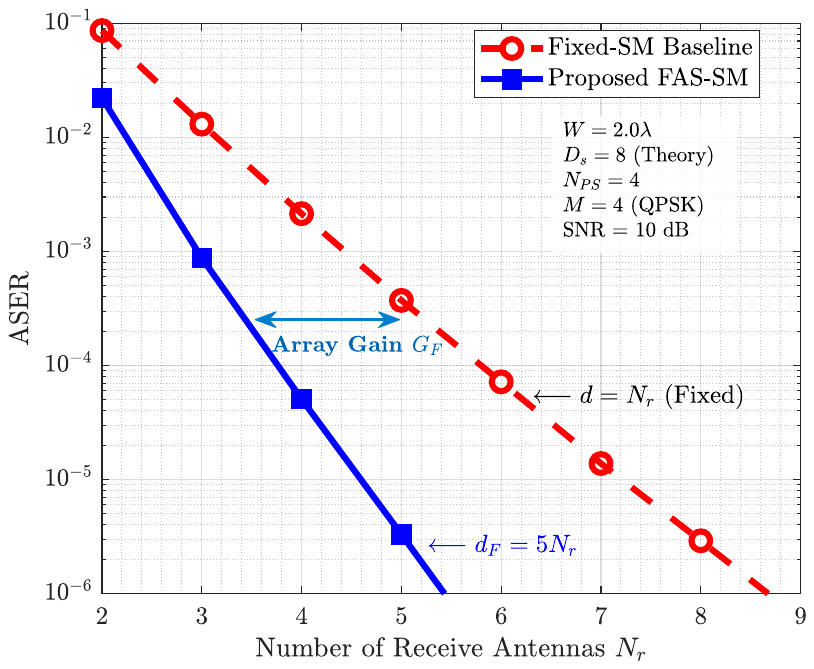}
\caption{ASER versus the number of receive antennas $N_r$ ( $N=256$).}\label{fig:Nr}
\end{figure}

In Fig.~\ref{fig:Nr}, we show the ASER as a function of the number of receive FPAs $N_r$ in order to verify the diversity scaling law of the proposed Tx-SIMO-FAS SM architecture. As can be seen, the conventional fixed-SM benchmark is fundamentally limited by the receive-side diversity and therefore maintains a diversity order of $d=N_r$. In contrast, the proposed Tx-SIMO-FAS SM scheme with the CC-COAS algorithm exhibits a much steeper decay, indicating that it can effectively exploit the spatial DoF provided by the fluid aperture. For the moderately enlarged aperture $W=2.0\lambda$, the effective energy-based DoF is identified as $D_s=8$. As a result, with $N_{\rm PS}=4$ active ports, the corresponding predicted effective diversity slope becomes $d_F=5N_r$. These results validate the multiplicative interaction between the receive-side combining gain and the transmit-side spatial selectability. In addition, the horizontal gap highlights the considerable extreme-value array gain $G_F$ enabled by the FAS-aided spatial constellation design.

In Fig.~\ref{fig:array_gain_saturation}, we illustrate the extreme-value array gain $G_F$ as a function of the number of fluid antenna ports $N$ for different normalized aperture sizes $W$. The results demonstrate that the analytical results derived from the Digamma-based expression agree closely with the simulation results, especially at the saturation plateau. In contrast to the independent and identically distributed (i.i.d.) Rayleigh benchmark, which exhibits unbounded logarithmic growth with increasing $N$, the FAS shows a clear saturation behavior, where the achievable gain is fundamentally limited by the aperture size. This observation indicates that the number of effective independent fading peaks is determined by the Rice frequency of the spatial process rather than by the sampling density itself. As the normalized aperture increases from one wavelength to six wavelengths, the saturation level rises significantly because the larger aperture introduces richer spatial fluctuations for envelope tracking. These results support the scalar equivalent approximation and confirm the aperture-limited saturation trend of FAS.

\begin{figure}[t]
\centering
\includegraphics[width=0.85\linewidth]{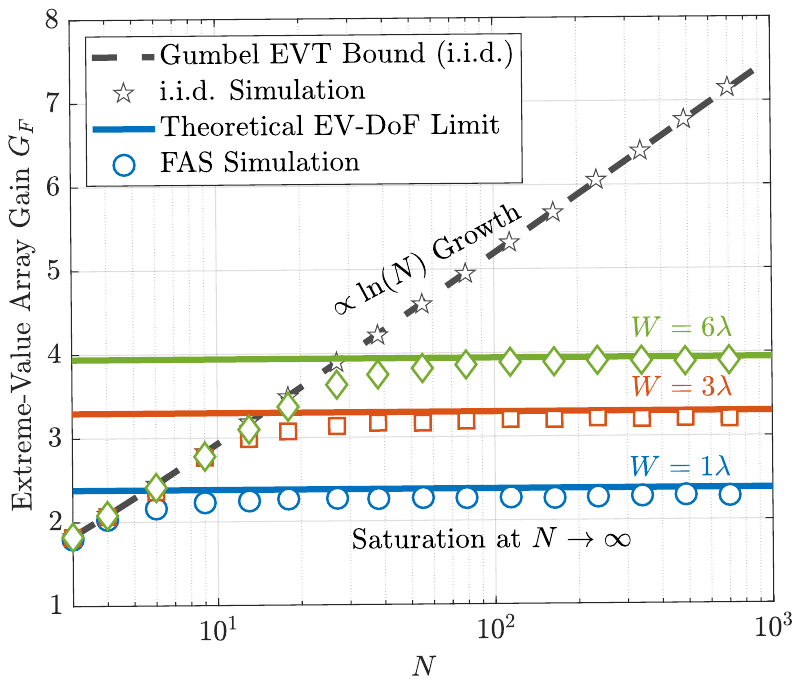}
\caption{Extreme-value array gain $G_F$ as a function of the number of fluid ports $N$.}\label{fig:array_gain_saturation}
\end{figure}

\section{Conclusion} \label{sec:conclusion}
This paper presented a systematic study of Tx-SIMO-FAS SM systems. To address the strong spatial correlation caused by limited spatial apertures, we designed effective spatial-domain constellations and proposed three port-selection algorithms for spatial-symbol mapping optimization. We also developed a theoretical framework that decouples the spatial characteristics into energy-based and extreme-value DoFs, based on which tight analytical bounds and a closed-form benchmark were derived for reliability evaluation. The theoretical analysis illustrated that the effective diversity slope follows a multiplicative scaling law with an order-statistics penalty under the equivalent independent-branch model, while the high-SNR extreme-value array gain was characterized through a scalar equivalent independent-look approximation based on the Digamma function. Simulation results showed good agreement with the analytical results, particularly in the high-SNR region. Overall, the proposed schemes outperformed the benchmark methods, although SOPS suffered noticeable degradation under highly constrained apertures because its strict orthogonality requirement led to a severe power penalty. Moreover, the results revealed that lower-order spatial indexing combined with higher-order symbol modulation is preferable in aperture-limited regimes. Finally, the observed saturation behavior further illustrated that the maximum array gain is mainly determined by the physical aperture size rather than the port sampling density. These findings establish a practical foundation for the deployment of ultra-dense FASs.

\begin{appendices}
\section{Proof of Lemma \ref{lemma:rank_preservation}}\label{appendix:proof_lemma}
We establish the pairwise non-collinearity property of the proposed CC-COAS algorithm by examining the pairwise spatial Gram matrix
\begin{equation}
\mathbf{C}_{i,j} = \begin{bmatrix}
\|\mathbf{h}_{i}\|_{2}^{2} & \mathbf{h}_{i}^H \mathbf{h}_j \\
\mathbf{h}_{j}^H \mathbf{h}_i & \|\mathbf{h}_{j}\|_{2}^{2}
\end{bmatrix}.
\end{equation}
Expanding the determinant of this Hermitian matrix yields
\begin{equation}
\det(\mathbf{C}_{i,j}) = \|\mathbf{h}_{i}\|_{2}^{2} \|\mathbf{h}_{j}\|_{2}^{2} \left( 1 - \phi(i,j)^{2} \right).
\end{equation}
Applying the correlation constraint $\phi(i,j) \le \rho_l$, we obtain a strictly positive lower bound for the determinant:
\begin{equation}
\det(\mathbf{C}_{i,j}) \ge \|\mathbf{h}_{i}\|_{2}^{2} \|\mathbf{h}_{j}\|_{2}^{2} (1 - \rho_l^{2}) > 0.
\end{equation}
For a $2 \times 2$ positive definite matrix, its minimum eigenvalue is bounded by the ratio of its determinant to its trace. Substituting the trace $\mathrm{Tr}(\mathbf{C}_{i,j}) = \|\mathbf{h}_{i}\|_{2}^{2} + \|\mathbf{h}_{j}\|_{2}^{2}$ into this fundamental inequality yields
\begin{equation}
\lambda_{\min}(\mathbf{C}_{i,j}) \ge \frac{\det(\mathbf{C}_{i,j})}{\mathrm{Tr}(\mathbf{C}_{i,j})} \ge \frac{\|\mathbf{h}_{i}\|_{2}^{2} \|\mathbf{h}_{j}\|_{2}^{2}}{\|\mathbf{h}_{i}\|_{2}^{2} + \|\mathbf{h}_{j}\|_{2}^{2}} \left( 1 - \rho_l^{2} \right) > 0.
\end{equation}
This strictly positive lower bound confirms that any selected port pair remains non-collinear under the correlation constraint. The proof of Lemma \ref{lemma:rank_preservation} is thus completed.

\section{Justification of Spatial DoF Approximation} \label{App:Proof_DoF}
To derive the energy-based DoF $D_s$, the spatial energy $E = \|\mathbf{h}\|_2^2$ is decomposed as $E = \sum_{k=1}^N \lambda_k |z_k|^2$, where $\lambda_k$ are the eigenvalues of $\mathbf{R}$. The first two moments of $E$ are
\begin{equation}
\left\{\begin{aligned}
\mathbb{E}[E] &= \sum_{k=1}^N \lambda_k,\\
\text{Var}(E) &= \sum_{k=1}^N \lambda_k^2.
\end{aligned}\right.
\end{equation}
We match these moments to a virtual system of $D_s$ independent branches with average energy $\mu$. The moments of the virtual system are defined by $\mathbb{E}[E_{\text{eq}}] = D_s \mu$ and $\text{Var}(E_{\text{eq}}) = D_s \mu^2$. By matching the mean and variance, the energy DoF is established as
\begin{equation}\label{eq:app_ds_final}
D_s = \frac{\left( \sum_{k=1}^N \lambda_k \right)^2}{\sum_{k=1}^N \lambda_k^2} = \frac{\mathrm{Tr}(\mathbf{R})^2}{\|\mathbf{R}\|_F^2}.
\end{equation}

Next, we evaluate the effective extreme-value DoF, $N_{\rm EV}$, which characterizes the peak-sampling potential of the fluid antenna aperture. For the Jakes process $R(\tau) = J_0(2\pi \tau)$, the spectral moment $\lambda_2$ is found by the curvature at the origin as $\lambda_2 = -R''(0)$. Using the identity $J_0''(0) = -1/2$, the second derivative is evaluated by
\begin{equation}
R''(0) = \left. \frac{d^2}{d\tau^2} J_0(2\pi \tau) \right|_{\tau=0} = -(2\pi)^2 \left( \frac{1}{2} \right) = -2\pi^2,
\end{equation}
which yields $\lambda_2 = 2\pi^2$. Based on the Rice frequency for Rayleigh envelopes, the density of independent maxima $\bar{n}$ is formulated by
\begin{equation}\label{eq:app_rice_detail}
\bar{n} = \sqrt{\frac{\pi}{6}} \sqrt{\lambda_2} = \sqrt{\frac{\pi}{6}} (\sqrt{2}\pi) \approx 3.21.
\end{equation}
Integrating this density over the aperture window $W$ and adding the initial sampling point gives the scaling law $N_{EV} = \lfloor \bar{n}W \rfloor + 1 \approx 3.21W + 1$. This justifies \eqref{eq:nev_result_condensed}.

\section{Proof of Lemma \ref{lemma:covariance_z_fas}}\label{appendix:proof_lemma2}
The joint covariance $\mathbf{\Sigma}_z = \mathbb{E}[\mathbf{z} \mathbf{z}^H]$ is partitioned into sub-blocks as
\begin{equation}
\mathbf{\Sigma}_z = \begin{bmatrix} \mathbb{E}[\mathbf{n}\mathbf{n}^H] & \mathbb{E}[\mathbf{n}\mathbf{r}_x^H] \\ \mathbb{E}[\mathbf{r}_x\mathbf{n}^H] & \mathbb{E}[\mathbf{r}_x\mathbf{r}_x^H] \end{bmatrix}.
\end{equation}
First, $\mathbb{E}[\mathbf{n}\mathbf{n}^H] = \mathbf{I}_{N_r}$ holds by assumption. By substituting the received signal component $\mathbf{r}_x = \mathbf{n} + \sqrt{\gamma}\mathbf{g}$, the cross-covariance is evaluated by
\begin{equation}
\mathbb{E}[\mathbf{n}\mathbf{r}_x^H] = \mathbb{E}[\mathbf{n}\mathbf{n}^H] + \sqrt{\gamma}\mathbb{E}[\mathbf{n}\mathbf{g}^H] = \mathbf{I}_{N_r},
\end{equation}
as $\mathbb{E}[\mathbf{n}\mathbf{g}^H] = \mathbf{0}$ by the statistical independence between the noise $\mathbf{n}$ and the equivalent channel $\mathbf{g}$. Similarly, $\mathbb{E}[\mathbf{r}_x\mathbf{n}^H] = \mathbf{I}_{N_r}$ holds by symmetry.

The signal auto-covariance is determined by the algebraic expansion
\begin{equation}
\mathbb{E}[\mathbf{r}_x\mathbf{r}_x^H] = \mathbb{E}[\mathbf{n}\mathbf{n}^H] + \gamma\mathbb{E}[\mathbf{g}\mathbf{g}^H] = (1 + \gamma \delta_x^2) \mathbf{I}_{N_r},
\end{equation}
as evaluated by substituting the channel covariance $\mathbb{E}[\mathbf{g}\mathbf{g}^H] = \delta_x^2 \mathbf{I}_{N_r}$. The proof is completed by assembling these blocks as
\begin{equation}
\mathbf{\Sigma}_z = \begin{bmatrix} \mathbf{I}_{N_r} & \mathbf{I}_{N_r} \\ \mathbf{I}_{N_r} & (1 + \gamma \delta_x^2)\mathbf{I}_{N_r} \end{bmatrix} = \begin{bmatrix} 1 & 1 \\ 1 & 1 + \gamma \delta_x^2 \end{bmatrix} \otimes \mathbf{I}_{N_r},
\end{equation}
thereby yielding the exact structure in \eqref{eq:covariance_z}. \hfill $\blacksquare$

\section{Derivation of the Scalar Extreme-Value Gain} \label{App:Proof_ArrayGain_Detailed}
Let $Y_1, \dots, Y_n$ be i.i.d.~variables with unit mean, where the cumulative distribution function (CDF) of each variable is $F_Y(y) = 1 - e^{-y}$. The CDF of the maximum statistic $X = \max\{Y_1, \dots, Y_n\}$ is established as
\begin{equation}\label{eq:app_X_CDF}
F_X(x) = \prod_{i=1}^n P(Y_i \le x) = (1 - e^{-x})^n.
\end{equation}
The expected value of $X$ is determined by integrating its complementary CDF over the non-negative domain
\begin{equation}\label{eq:app_E_integral}
\mathbb{E}[X] = \int_{0}^{\infty} [1 - (1 - e^{-x})^n] dx.
\end{equation}
To evaluate this integral, we perform a change of variable by setting $u = 1 - e^{-x}$, which implies $du = e^{-x} dx = (1 - u) dx$. Rearranging for $dx$ yields $dx = \frac{du}{1 - u}$. As $x$ ranges from $0$ to $\infty$, $u$ maps to the interval $[0, 1]$. Substituting these terms into \eqref{eq:app_E_integral} results in
\begin{equation}\label{eq:app_u_integral}
\mathbb{E}[X] = \int_{0}^{1} \frac{1 - u^n}{1 - u} du.
\end{equation}
Applying the finite geometric series identity $\frac{1 - u^n}{1 - u} = \sum_{k=0}^{n-1} u^k$ allows for term-wise integration
\begin{equation}\label{eq:app_harmonic_sum}
\mathbb{E}[X] = \sum_{k=0}^{n-1} \int_{0}^{1} u^k du = \sum_{k=0}^{n-1} \frac{1}{k+1} = \sum_{j=1}^n \frac{1}{j},
\end{equation}
which defines the $n$-th harmonic number $H_n$. To analytically extend this discrete result to the continuous spatial potential characterized by $N_{EV}$, we utilize the fundamental identity relating harmonic numbers to the Digamma function:
\begin{equation}\label{eq:app_digamma_identity}
H_n = \psi(n + 1) + \gamma_E.
\end{equation}
Substituting the effective independent-look number $N_{\rm EV}$ for the integer $n$ in \eqref{eq:app_digamma_identity}, we obtain the scalar equivalent approximation \begin{equation} G_F \approx \psi(N_{\rm EV}+1)+\gamma_{\rm E}. \end{equation} This result should be interpreted as an effective independent-look approximation for the aperture-limited scalar peak-tracking gain, rather than as an exact order statistic of all correlated physical ports.
\end{appendices}


\end{document}